\newcommand{\vnorm}[1]{\left|\left|#1\right|\right|}
\newtheorem{thm}{Theorem}[section]
\newtheorem{cor}[thm]{Corollary}
\begin{document}

\title{Quadratic fermionic interactions yield effective Hamiltonians
  for adiabatic quantum computing}
\date{\today}

\author{Michael J. O'Hara}
\email[Email: ]{mjohara@gmail.com}
\affiliation{Applied Mathematics Program, University of Maryland, 
College Park MD 20742}

\author{Dianne P. O'Leary}
\email[Email: ]{oleary@cs.umd.edu}
\affiliation{Department of Computer Science and Institute for Advanced
Computer Studies, University of Maryland, College Park MD 20742; and
National Institute of Standards and Technology, Mathematical and
Computational Sciences Division, Gaithersburg MD 20899}

\begin{abstract}
    Polynomially-large ground-state energy gaps are rare in many-body
    quantum systems, but useful for adiabatic quantum computing.  We
    show analytically that the gap is generically polynomially-large
    for quadratic fermionic Hamiltonians.  We then prove that
    adiabatic quantum computing can realize the ground states of
    Hamiltonians with certain random interactions, as well as the
    ground states of one, two, and three-dimensional fermionic
    interaction lattices, in polynomial time.  Finally, we use the
    Jordan-Wigner transformation and a related transformation for
    spin-3/2 particles to show that our results can be restated using
    spin operators in a surprisingly simple manner.  A direct
    consequence is that the one-dimensional cluster state can be found
    in polynomial time using adiabatic quantum computing.
\end{abstract}


\maketitle

\section{Introduction}

Adiabatic quantum computing (AQC) \cite{farhi01} is an approach to
quantum computation where a problem is encoded as the ground state of
some Hamiltonian $\mathcal{H}_P$.  It is assumed that it is feasible
to prepare a physical system in the ground state of some simple
Hamiltonian $\mathcal{H}_0$, and then evolve the Hamiltonian slowly
from $\mathcal{H}_0$ to $\mathcal{H}_P$. Under the right conditions
and if the evolution is done sufficiently slowly \cite{ohara08}, then
at the end of the evolution the state of the system is the ground
state of $\mathcal{H}_P$.  Measurement of this final state reveals the
solution to the original problem.  As an approach to quantum
computing, AQC is known to be equivalent to standard gated quantum
computing, in that each can be efficiently simulated by the other
\cite{aharonov07, vandam02}.  Also, a simple Hamiltonian evolution
corresponds to the Grover search algorithm \cite{roland02}.

AQC succeeds in polynomial time if the inverse of the ground-state
energy gap is bounded by a polynomial in the problem size. A typical
Hamiltonian must fit exponentially-many energy levels into a
polynomial-sized energy range, so most energy gaps must be
exponentially small.  It is not clear {\em a priori} why the
ground-state energy gap should ever be larger than the rest.  Let us
be more precise.  Mathematically, we can construct a Hamiltonian with
any given set of $2^n$ energy levels.  Theorem~\ref{thm::rarity}
(below) establishes that large ground-state energy gaps are rare among
choices of energy levels.
\begin{thm}[Large ground-state energy gaps are rare]
  \label{thm::rarity}
  Consider uniform random choices of energy levels for a Hermitian
  operator on $n$ qubits, under the restriction that the ground-state
  energy is zero and the energies are contained in the unit interval.
  The fraction of these choices with a ground-state energy gap greater
  than $2^{-n/2}$ tends to $e^{-2^{n/2}}$ for large $n$.
  \begin{proof}
    The ground-state energy gap is larger than some $\epsilon > 0$
    provided the $2^n-1$ non-zero energy levels are selected from the
    interval $(\epsilon, 1]$.  Thus the fraction of choices of
    energy levels with a ground-state energy gap of at least $\epsilon$ is
    $(1-\epsilon)^{2^n-1}$.  Let us choose $\epsilon=2^{-n/2}$, then
    we have
    \begin{align}
      \left( 1-2^{-n/2} \right)^{2^n-1} &= 
      \left( 1-2^{-n/2} \right)^{\sqrt{2^n-1}\sqrt{2^n-1}} \\
      &\approx \left(e^{-1}\right)^{\sqrt{2^n-1}} \\
      &\approx e^{-2^{n/2}} \;.
    \end{align}
  \end{proof}
\end{thm}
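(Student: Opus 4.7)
The plan is to reduce the statement to a standard order-statistics computation. Since the ground-state energy is fixed at zero, the ground-state gap equals the minimum of the remaining $2^{n}-1$ energy levels, each drawn independently and uniformly from $[0,1]$. So the event ``gap $> \epsilon$'' is exactly the event that every one of these $2^{n}-1$ values lies in the subinterval $(\epsilon,1]$.

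First I would write down the probability of this event. By independence and the fact that a uniform variable on $[0,1]$ lies in $(\epsilon,1]$ with probability $1-\epsilon$, the measure of the favorable set of energy configurations is $(1-\epsilon)^{2^{n}-1}$. Setting $\epsilon = 2^{-n/2}$ gives the exact fraction $(1-2^{-n/2})^{2^{n}-1}$.

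Second I would take the asymptotic limit. Writing $2^{n}-1 = 2^{n/2}\cdot 2^{n/2} - 1$, and using the standard limit $(1-1/m)^{m}\to e^{-1}$ with $m = 2^{n/2}$, one obtains $(1-2^{-n/2})^{2^{n/2}}\to e^{-1}$, and raising to the remaining factor of $2^{n/2}$ (up to the harmless $-1$) yields the claimed $e^{-2^{n/2}}$.

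The only subtle point, and the main thing to be careful about, is the double limit: we are raising a quantity that tends to $1$ to a power that tends to $\infty$, so the two rates must be tracked together. I would justify this rigorously by taking logarithms and expanding $\ln(1-2^{-n/2}) = -2^{-n/2} + O(2^{-n})$, so that $(2^{n}-1)\ln(1-2^{-n/2}) = -2^{n/2} + O(1)\cdot 2^{-n/2}\to -2^{n/2}$ in the sense that the ratio of the left side to $-2^{n/2}$ approaches $1$. Exponentiating then gives the stated asymptotic, completing the argument. No structural feature of Hermitian operators beyond the spectral parametrization is needed, which is why the conclusion is purely combinatorial.
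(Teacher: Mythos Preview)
Your argument is correct and follows essentially the same route as the paper: both identify the favorable fraction as $(1-\epsilon)^{2^n-1}$, set $\epsilon=2^{-n/2}$, and then pass to the limit using $(1-1/m)^m\to e^{-1}$. Your logarithmic justification of the double limit is in fact a bit more careful than the paper's heuristic manipulation, but the substance is identical.
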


In fact, since the dimension of the problem is exponentially large in
the number of qubits, it is difficult to even {\em determine} the
minimum ground-state energy gap for large problems.  Therefore, it
would be useful to identify a class of Hamiltonians meeting the
following requirements:
\begin{enumerate}
\item The class should have large ground-state energy gaps.
  \label{req::GndGap}
\item The class should have many degrees of freedom and allow many
  interactions between qubits, so that it represents diverse
  problem-instances for AQC.  \label{req::DoF}
\item The class should include a simple $\mathcal{H}_0$ for AQC, and
  be convex, so that
  $\mathcal{H}(s)$, where
  \begin{equation}
    \mathcal{H}(s)=(1-s)\mathcal{H}_0 + s\mathcal{H}_P \;,
    \end{equation}
  stays within the class for $s\in [0,1]$.
  \label{req::closure}
\end{enumerate}

We show that quadratic fermionic interactions can be used to define a
class of Hamiltonians meeting all three requirements.  In
Section~\ref{sec::background}, we develop background on the fermionic
commutation relations, necessary for Section~\ref{sec::EffectiveH},
where we identify a class meeting all three requirements.  We show
that, provided the interaction coefficients are bounded, the fraction
of Hamiltonians with a ground-state energy gap greater than $2x/n$ for
any $x>0$ tends to $e^{-x}$ for large $n$. For random Hamiltonians
whose interaction coefficients are constructed from standard normal
Gaussian distributions, we show the ground-state energy gap is
$\mathcal{O}(1/\sqrt{n})$.  In Section~\ref{sec::largegaps}, we
analytically bound the ground-state energy gap for two and three
dimensional lattices of interacting fermions.  We then show that AQC
can find the ground states of certain random Hamiltonians and these
two and three dimensional fermionic lattices in polynomial time.
Finally, in Section~\ref{sec::representation} we derive alternate
representations of these Hamiltonians using the Jordan-Wigner
transform and a related transform for spin-3/2 particles, and show
that the one-dimensional cluster state may be obtained in polynomial
time using adiabatic quantum computation.

\section{Background on the fermionic commutation relations}
\label{sec::background}

For a detailed exposition on properties of the fermionic commutation
relations (FCRs), see, e.g., \cite{nielsen05, blaizot86,
ohara08thesis}.  Here we only highlight some essential points, mostly
without proof, that is needed to develop the results in following
sections.

The FCRs on a set of linear operators $\{c_j: j=1\dots n\}$ are
\begin{align}
  \{c_j, c_k^{\dagger}\} &= \delta_{j,k}\;, & \{c_j, c_k\} &= 0 \;,
  \label{eqn::FCRsB}
\end{align}
where the bracket notation indicates the anti-commutator
$\{x,y\}=xy+yx$, and $\delta_{jk}$ equals the identity operator if
$j=k$ and zero otherwise.  The superscript dagger
denotes the Hermitian adjoint.  A consequence of the
FCRs is that $\{c_j, c_j^{\dagger}: j=1\dots n\}$ are creation and
annihilation operators that anticommute.

Suppose we have a Hamiltonian of the form
\begin{align}
  \mathcal{H} &= \sum_{j=1}^n C_j c_j^{\dagger}c_j \;,
  \label{eqn::SimpleFermiB}
\end{align}
where the coefficients $C_j$ are positive and real.  All the terms in
$\mathcal{H}$ commute, and the $j^{th}$ term has eigenvalues $0$ and
$C_j$.  Now, take the sum of elements in each possible subset
(including the empty set) of $\{C_j:\;j=1\dots n\}$.  The $2^n$
resulting values are the eigenvalues of $\mathcal{H}$.  In particular,
the ground-state energy of $\mathcal{H}$ is zero, and the ground-state
energy gap is the least non-zero coefficient $C_j$.  To decide whether
an arbitrary value is an eigenvalue of $\mathcal{H}$ for arbitrary
coefficients is NP-complete however, as it is equivalent to the
subset-sum problem (also known as the knapsack problem)
\cite{nielsen05}.

We can write many Hamiltonians in the form of
(\ref{eqn::SimpleFermiB}) using Theorem~\ref{thm::lieb} (below),
originally due to Lieb {\em et al.}~\cite{lieb61}.  Suppose we have a
{\em quadratic fermionic Hamiltonian} $\mathcal{H}$, defined as
\begin{equation}
  \mathcal{H} = \sum_{j,k=1}^n 
  A_{j,k} \left(c_j^{\dagger} c_k - c_j c_k^{\dagger}\right)
  + B_{j,k} \left(c_j^{\dagger} c_k^{\dagger} - c_j c_k \right) \;,
  \label{eqn::quadForm}
\end{equation}
for some set of real coefficients $A_{j,k}$ and $B_{j,k}$.  For
convenience, we gather the coefficients $A_{j,k}$ and $B_{j,k}$ into
real $n\times n$ matrices that we label $A$ and $B$.  If $B=0$, then
$\mathcal{H}$ represents a Hubbard model with no on-site interactions,
for instance electrons in metals \cite{devreese79} or
graphene \cite{pachos07}. If $A$ and $B$ are
tridiagonal, then $\mathcal{H}$ represents a one-dimensional chain of
interacting spin-1/2 particles.  If $A$ and $B$ have three non-zero
super- and sub-diagonals, then $\mathcal{H}$ represents a chain of
interacting spin-3/2 particles \cite{dobrov03}.  Also, we can see
using the FCRs that different choices of $A$ and $B$ may represent the
same Hamiltonian.  In particular, for any given Hamiltonian, $A$ can
be chosen to be symmetric and $B$ anti-symmetric.

Theorem~\ref{thm::lieb} establishes that we can write
(\ref{eqn::quadForm}) in the form of (\ref{eqn::SimpleFermiB}), added
to a multiple of the identity.  Thus we can easily find the first few
eigenvalues of $\mathcal{H}$, and in particular its ground-state
energy gap.
\begin{thm}[Lieb {\em et al.}, 1961]
  \label{thm::lieb}
  Consider a quadratic fermionic Hamiltonian as in
  (\ref{eqn::quadForm}), where $A$ is an $n\times n$ real symmetric
  matrix, $B$ is an $n\times n$ real anti-symmetric matrix, and the
  operators $\{c_k:k=1,\dots,n\}$ satisfy the FCRs.  Then we can find
  $\Lambda^2$ diagonal and $X$ unitary so that
  $X(A-B)(A+B)=\Lambda^2X$, and $Y$ unitary so that
  $Y(A+B)(A-B)=\Lambda^2Y$.  Define the operators $\{\eta_j:
  j=1,\dots,n\}$ by
  \begin{align}
    \left.\begin{array}{ll}
      \eta_j &= \frac{1}{2}\sum_{k=1}^n \left(X_{jk}+Y_{jk}\right) c_k 
             + \left(X_{jk}-Y_{jk}\right) c_k^{\dagger} \\
      \eta_j^{\dagger} &= \frac{1}{2}\sum_{k=1}^n 
       \left(X_{jk}-Y_{jk}\right) c_k 
       + \left(X_{jk}+Y_{jk}\right) c_k^{\dagger}
    \end{array}\;\right\}\;
      j = 1\dots n \;.
  \end{align}
  
  Then $\{\eta_j:\;j=1,\dots,n\}$ satisfy the FCRs, and
  \begin{align}
    \mathcal{H} &= \sum_{j=1}^n 2\Lambda_j \eta_j^{\dagger} \eta_j 
    - \left(\sum_{j=1}^n \Lambda_j\right) I_{2^n}\;,
    \label{eqn::princip}
  \end{align}
  where $\Lambda_j$ denotes the $j^{th}$ entry on the diagonal of the
  matrix $\Lambda$ and $I_{2^n}$ is the identity operator.
  \begin{proof}
    See the appendix.
  \end{proof}
\end{thm}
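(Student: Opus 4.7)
The plan is to treat (\ref{eqn::quadForm}) as a quadratic form in the $2n$ operators $c_1,\dots,c_n,c_1^\dagger,\dots,c_n^\dagger$ and to find a linear (Bogoliubov) change of variables that both preserves the FCRs and simultaneously diagonalizes the quadratic form. Using the symmetry of $A$ and the antisymmetry of $B$, I would first rewrite $\mathcal{H}$ compactly in matrix notation; the four terms combine so that the coupling between the $c$'s and $c^\dagger$'s is governed by the matrix $A-B$ on one side and $A+B$ on the other (these are transposes of each other since $A^T=A$ and $B^T=-B$). This asymmetry is exactly what makes the singular value decomposition of $A+B$ the natural tool.

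Next I would postulate the transformation displayed in the theorem, with real $n\times n$ matrices $X$ and $Y$, and write $g=(X+Y)/2$, $h=(X-Y)/2$ so that $\eta_j=\sum_k g_{jk}c_k+h_{jk}c_k^\dagger$. Enforcing the FCRs on $\{\eta_j,\eta_j^\dagger\}$ reduces, after a short calculation using (\ref{eqn::FCRsB}), to the matrix identities $gg^T+hh^T=I$ and $gh^T+hg^T=0$. Substituting $g=(X+Y)/2$, $h=(X-Y)/2$ collapses these to $XX^T=I$ and $YY^T=I$, so the requirement is simply that $X$ and $Y$ be orthogonal. Thus any orthogonal pair produces a legitimate new set of fermionic modes.

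The diagonalization requirement is equivalent to the commutator condition $[\mathcal{H},\eta_j^\dagger]=2\Lambda_j\eta_j^\dagger$, i.e.\ that each $\eta_j^\dagger$ raises the energy by $2\Lambda_j$. Expanding this commutator using the FCRs and matching coefficients of $c_k$ and $c_k^\dagger$ yields a coupled pair of linear equations in the rows of $g$ and $h$. Adding and subtracting them, and recalling $X=g+h$ and $Y=g-h$, decouples the system into $X(A-B)=\Lambda Y$ and $Y(A+B)=\Lambda X$. Eliminating $Y$ gives $X(A-B)(A+B)=\Lambda^2 X$ and eliminating $X$ gives $Y(A+B)(A-B)=\Lambda^2 Y$, which are the statement's defining equations. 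Because $(A-B)(A+B)=(A+B)^T(A+B)$ and $(A+B)(A-B)=(A+B)(A+B)^T$ are real symmetric positive semidefinite with identical nonzero spectra, the SVD of $A+B$ furnishes orthogonal $X$, $Y$ and a common nonnegative $\Lambda$, so the system is simultaneously solvable.

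Finally I would substitute the inverted transformation back into (\ref{eqn::quadForm}) to verify that $\mathcal{H}=\sum_j 2\Lambda_j\eta_j^\dagger\eta_j+cI$ for some scalar $c$, then fix $c$ by taking the trace of both sides: since $\mathrm{tr}\,\mathcal{H}=0$ (the original form is manifestly traceless once one uses $\{c_j,c_j^\dagger\}=1$ to symmetrize, because each pair $c_j^\dagger c_k-c_jc_k^\dagger$ contributes zero trace) while $\mathrm{tr}(\eta_j^\dagger\eta_j)=2^{n-1}$, matching traces gives $c=-\sum_j\Lambda_j$, which reproduces (\ref{eqn::princip}). I expect the main obstacle to be the bookkeeping in step three, where one must carefully track the non-commutativity when computing $[\mathcal{H},\eta_j^\dagger]$ and confirm that the symmetrization inherent in writing $A$ symmetric and $B$ antisymmetric is exactly what is needed for the SVD-style decoupling; everything else is essentially linear algebra on real matrices.
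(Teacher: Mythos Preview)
Your proposal is correct and arrives at exactly the same key pair of equations $X(A-B)=\Lambda Y$, $Y(A+B)=\Lambda X$ that the paper derives, but you reach them by a slightly different path. The paper works purely at the level of the $2n\times 2n$ coefficient matrix: it writes $\mathcal{H}$ as a bilinear form in $(c,c^\dagger)$ with block matrix $\left(\begin{smallmatrix}A&B\\-B&-A\end{smallmatrix}\right)$, posits the block-diagonalizing transformation, multiplies out, and reduces the resulting four block equations to the two above. You instead impose the ladder condition $[\mathcal{H},\eta_j^\dagger]=2\Lambda_j\eta_j^\dagger$ and match coefficients. Both are standard Bogoliubov arguments and are equivalent in content; your route is a bit more physical, the paper's a bit more purely linear-algebraic.

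The other small difference is in fixing the additive constant: the paper simply uses $\eta_k\eta_k^\dagger=I-\eta_k^\dagger\eta_k$ on the diagonalized form $\sum_k\Lambda_k(\eta_k^\dagger\eta_k-\eta_k\eta_k^\dagger)$ to read off $c=-\sum_k\Lambda_k$ directly, whereas you compute it by a trace match. Your trace argument is valid (each symmetrized term in (\ref{eqn::quadForm}) is indeed traceless), but the paper's one-line FCR substitution is quicker and avoids the need to justify tracelessness separately. Either way, nothing is missing from your outline.
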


Theorem~\ref{thm::lieb} was used initially by Lieb {\em et al.} to
find the spectrum of the one-dimensional XY model, and subsequently
has been used, for instance, in the analysis of the one-dimensional
model of free electron transport \cite{blaizot86}.  Quadratic
fermionic Hamiltonians as in (\ref{eqn::quadForm}) have also sparked
recent interest because of their application to quantum complexity
theory.  If one takes a set of gates defined by
$U=\exp(i\mathcal{H}t)$ for some $t$ and a constant quadratic
fermionic Hamiltonian $\mathcal{H}$ in the form of
(\ref{eqn::quadForm}), then one obtains a set of gates that resembles
a universal set, but in fact may be classically simulated
\cite{terhal02}.  Broader sets of gates that can be classically
simulated have been identified \cite{somma06, jozsa08}.  To
classically simulate an evolving Hamiltonian, it has been shown
\cite{vandam02} that Hamiltonian evolutions may be efficiently
approximated by discretizing the evolution into a sequence of short,
constant Hamiltonians.  Theorem~\ref{thm::lieb} has also been used to
find efficient sets of quantum gates for computing properties of
quadratic fermionic Hamiltonians in the form (\ref{eqn::quadForm})
\cite{verstraete08}.  Further, the relationship between a vanishing
energy gap and discontinuity in the ground state has been studied for
these Hamiltonians \cite{zanardi07}.

The Jordan-Wigner transformation applied to the Hamiltonian evolution
\begin{equation}
  \mathcal{H}(s) = (1-s) \sum_{j=1}^n \sigma_j^z
  + s \sum_{j=1}^{n-1}  \sigma_j^x \sigma_{j+1}^x
  \label{eqn::1dising}
\end{equation}
transforms $\mathcal{H}(s)$ into a quadratic fermionic Hamiltonian in
the form of (\ref{eqn::quadForm}), thus providing a means for
determining the spectrum for any $s\in[0,1]$.  In fact, this evolution
exhibits a second-order quantum phase transition, and a ground-state
energy gap that decreases as $\mathcal{O}(1/n)$
\cite{zurek05,schutzhold06}, which is surprisingly large. Yet much
broader classes of quadratic fermionic Hamiltonians in the form of
(\ref{eqn::quadForm}) also have remarkably large ground-state energy
gaps.

Since $(A+B)^{\dagger}=A-B$, and since the singular values of a matrix
$M$ are the square roots of the eigenvalues of $M^{\dagger}M$, we see
that $\Lambda_j$ from (\ref{eqn::princip}) is a singular value of
$A+B$. Further, $C_j$ in (\ref{eqn::SimpleFermiB}) can be defined to
be $2\Lambda_j$.  Thus, if $A+B$ is non-singular, then twice the least
singular value is the ground-state energy gap of $\mathcal{H}$.  If
$A+B$ is singular, then $\mathcal{H}$ has a degenerate ground state,
and the least non-zero singular value is the energy gap between the
ground-state subspace and the higher energy levels of the Hamiltonian.
In any case, since $A+B$ has only $n$ eigenvalues, in contrast to
$\mathcal{H}$ which has $2^n$, we might expect that often the least
singular value of $A+B$ is not exponentially small in $n$.  Then the
ground-state energy gap of $\mathcal{H}$ would not be exponentially
small.  In the next section we state and prove more precise
formulations of this claim.

\section{A class of Hamiltonians meeting the requirements of AQC}
\label{sec::EffectiveH}

The class of quadratic fermionic Hamiltonians represented by
(\ref{eqn::quadForm}) in fact meets all three requirements to be
useful for adiabatic quantum computing.  To establish that the
ground-state energy gaps are large (Requirement~\ref{req::GndGap}), we
provide two theorems.  In Theorem~\ref{thm::uniformLargeGaps} (below),
we take a particular distribution of
coefficient matrices $A$ and $B$ under the restriction
$\vnorm{A+B}_2\leq 1$, and establish that the ground-state energy gap
is $\mathcal{O}(1/n)$.  Here all matrix norms are assumed to be the
two-norm, which is equal to the square root of the largest singular
value.  Then, in Theorem~\ref{thm::rndgap}, we show that the
ground-state energy gap is $\mathcal{O}(1/\sqrt{n})$ for
Gaussian-distributed interaction coefficients.

\begin{thm}[Ground-state energy gaps of quadratic 
fermionic Hamiltonians with bounded coefficients]
  \label{thm::uniformLargeGaps}
  Choose a real diagonal $n\times n$ matrix $\Sigma$ uniformly at
  random with entries in the unit interval, and choose $U$ and $V$
  according to any probability distribution over orthogonal $n\times n$
  matrices. Then $C=U\Sigma V^{\dagger}$ represents a
  distribution over all real matrices with $\vnorm{C}_2 \leq 1$.  Take
  $A$ to be the symmetric part of $C$ and $B$ to be the anti-symmetric
  part of $C$, e.g. $A=(C+C^{\dagger})/2$ and $B=(C-C^{\dagger})/2$,
  and let $\mathcal{H}$ be defined as in (\ref{eqn::quadForm}).  The
  probability that the ground-state energy gap of $\mathcal{H}$ is
  greater than $2x/n$, for any $x>0$, tends to $e^{-x}$ for large $n$.
  \begin{proof}
    If the ground-state energy gap $\mathcal{H}$ is greater than
    $2x/n$, then the singular values of $C$ are contained in the
    interval $(x/n,1)$.  The fraction of choices for $\Sigma$ where
    this is true is
    \begin{equation}
      \left(1-\frac{x}{n}\right)^n 
      = \left[\left(1-\frac{x}{n}\right)^{n/x}\right]^x \;,
    \end{equation}
    which tends to $e^{-x}$ for large $n$.
  \end{proof}
\end{thm}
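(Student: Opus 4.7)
The plan is to reduce the claim to a statement about the singular values of the random matrix $C$, and then to a direct probability calculation about uniform random variables.

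First I would observe that, by construction, $A+B=(C+C^{\dagger})/2+(C-C^{\dagger})/2=C$, so the singular values of $A+B$ are exactly the singular values of $C$. Because $C=U\Sigma V^{\dagger}$ with $U,V$ orthogonal and $\Sigma$ diagonal with nonnegative entries in $[0,1]$, these singular values are precisely the diagonal entries of $\Sigma$. In particular, the smallest singular value of $A+B$ equals $\min_j \Sigma_{jj}$, and is independent of the choice of distributions on $U$ and $V$.

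Next I would invoke the conclusion established in Section~\ref{sec::background}: by Theorem~\ref{thm::lieb}, the ground-state energy gap of $\mathcal{H}$ is twice the least (nonzero) singular value of $A+B$. Thus the event that the ground-state energy gap exceeds $2x/n$ is exactly the event that every diagonal entry of $\Sigma$ exceeds $x/n$.

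Finally, since the $n$ diagonal entries of $\Sigma$ are i.i.d.\ uniform on $[0,1]$, the probability that all of them lie in $(x/n,1]$ is $(1-x/n)^n$, which tends to $e^{-x}$ as $n\to\infty$. There is essentially no obstacle here; the only point requiring care is the algebraic observation that $A+B=C$, which collapses the problem from a spectral question about a $2^n$-dimensional operator to an elementary question about $n$ independent uniform random variables.
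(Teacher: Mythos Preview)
Your proof is correct and follows exactly the same route as the paper's: identify the ground-state gap with twice the least singular value of $A+B=C$, observe that these singular values are the diagonal entries of $\Sigma$, and compute $\Pr(\min_j\Sigma_{jj}>x/n)=(1-x/n)^n\to e^{-x}$. You merely make explicit the steps (notably $A+B=C$ and the appeal to Theorem~\ref{thm::lieb}) that the paper leaves implicit.
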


To determine the ground-state energy gap for Gaussian-distributed
interaction coefficients, we first need the following theorem about
random matrices, due to Edelman \cite[Corollary 3.1]{edelman88}.
\begin{thm}[Edelman, 1988]
  \label{thm::edelman}
  Let $C$ be an $n\times n$ matrix, whose elements have independent
  Gaussian distributions with mean zero and unit variance.  We denote
  such distributions as N(0,1).  Let $\varsigma$ be the least singular
  value of $C$.  Then for large $n$, $n\varsigma^2$ converges in
  distribution to
  \begin{equation}
    \rho(x) = \frac{1+\sqrt{x}}{2\sqrt{x}}e^{-(x/2+\sqrt{x})} \;.
    \label{eqn::dist}
  \end{equation}
\end{thm}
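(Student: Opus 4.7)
The plan is to reduce the claim to a calculation about the smallest eigenvalue of a real Wishart matrix and then extract its large-$n$ asymptotics. Setting $W = C^{T}C$, the squared singular values $\lambda_j = \sigma_j^2$ are the eigenvalues of a real Wishart matrix with $n$ degrees of freedom, whose joint density on the positive orthant is proportional to $\prod_{i<j}|\lambda_i - \lambda_j|\,\prod_i \lambda_i^{-1/2} e^{-\lambda_i/2}$. The quantity $n\varsigma^2 = n\lambda_{\min}$ therefore has a limit that is a statement about this joint density near the origin.

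First I would derive a closed-form expression for the density (or cumulative distribution function) of $\lambda_{\min}$ at finite $n$. A natural route is to change variables $\lambda_j = \lambda_{\min} + \mu_j$ for $j \geq 2$, pull the dependence on $\lambda_{\min}$ out of the Vandermonde product and the exponential factors, and integrate over $\mu_j \geq 0$. For the real ($\beta=1$) ensemble the resulting integral has Pfaffian structure and can, after a skew-orthogonal-polynomial manipulation, be collapsed to a compact expression; alternatively one may write $\Pr(\lambda_{\min} > t) = \int_{[t,\infty)^n} f$ directly as a Pfaffian and simplify using a classical identity.

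Second, I would rescale by $\lambda_{\min} = x/n$ and take $n \to \infty$. The factor $\lambda_{\min}^{-1/2}$ produces the singular $1/\sqrt{x}$ in $\rho(x)$; the integral over the remaining $\mu_j$, taken at the hard edge of the spectrum, supplies the $(1+\sqrt{x})$ prefactor; and the exponent $-(x/2 + \sqrt{x})$ arises by combining the direct $e^{-\lambda_{\min}/2}$ with the large-deviation contribution from the eigenvalue repulsion between $\lambda_{\min}$ and the bulk, which in the scaling limit is governed by the Bessel $\beta = 1$ hard-edge process. The $\sqrt{x}$ terms are precisely the trademark of this hard-edge scaling for the orthogonal Wishart ensemble.

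The principal obstacle is obtaining a clean closed form for the finite-$n$ density of $\lambda_{\min}$ in the real case. In the complex ($\beta = 2$) ensemble the corresponding calculation reduces directly to a determinant of Laguerre polynomials, but the real case demands more delicate Pfaffian manipulations and a nontrivial collapse identity, which is the genuine technical step in Edelman's argument. Once that closed form is in hand, the $n \to \infty$ asymptotics are essentially a Laplace-method computation on a one-dimensional integral.
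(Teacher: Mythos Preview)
The paper does not prove this theorem at all: it is stated as a citation of Edelman's result \cite[Corollary~3.1]{edelman88} and used as a black box. There is therefore nothing in the paper to compare your proposal against.

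That said, your outline is a reasonable sketch of the actual argument in Edelman's paper. The reduction to the real Wishart ensemble and the joint density $\prod_{i<j}|\lambda_i-\lambda_j|\prod_i \lambda_i^{-1/2}e^{-\lambda_i/2}$ is correct, and the strategy of extracting a closed form for the distribution of $\lambda_{\min}$ and then rescaling is exactly what Edelman does. You are also right that the real ($\beta=1$) case is the technically harder one. One caution: your narrative about the $\sqrt{x}$ terms coming from ``large-deviation contribution from eigenvalue repulsion'' and the ``Bessel $\beta=1$ hard-edge process'' is heuristic and somewhat anachronistic---Edelman's 1988 derivation is a direct finite-$n$ calculation followed by elementary asymptotics, not a hard-edge scaling-limit argument in the modern random-matrix sense. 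If you intend to reconstruct the proof rather than cite it, the substantive work is the finite-$n$ closed form, and you should be prepared to carry out the Pfaffian/skew-orthogonal manipulation explicitly rather than gesture at it.
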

Since $n\varsigma^2$ has a probability distribution that is
asymptotically independent of $n$, it follows that
$\varsigma=\mathcal{O}(1/\sqrt{n})$.  Also, (\ref{eqn::dist}) implies
that $\varsigma \neq 0$ with probability
one.  Similar results for other ensembles of random
matrices are known \cite{mehta04}.

Let us now apply Theorem~\ref{thm::edelman} to quadratic fermionic
Hamiltonians in the form of (\ref{eqn::quadForm}).
\begin{thm}[Ground-state energy gaps of quadratic 
fermionic Hamiltonians with Gaussian coefficients]
  \label{thm::rndgap}
  Let $C$ be an $n\times n$ matrix with independent N(0,1)
  coefficients, let $A$ be the symmetric part of $C$, and let $B$ be
  the anti-symmetric part of $C$, so
  \begin{align}
    A &= \frac{C+C^{\dagger}}{2} \;, &
    B &= \frac{C-C^{\dagger}}{2} \;,
  \end{align}
  and $C=A+B$.  Define
  \begin{equation}
    \mathcal{H}= \sum_{j,k=1}^n 
    A_{j,k}\left(c_j^{\dagger}c_k - c_jc_k^{\dagger}\right)+
    B_{j,k}\left(c_j^{\dagger}c_k^{\dagger} - c_jc_k\right) \;,
    \label{eqn::randH}
  \end{equation}
  and let $\gamma$ be the ground-state energy gap of $\mathcal{H}$.
  Then, for large $n$, $n\gamma^2/4$ converges in distribution to
  $\rho(x)$ defined in (\ref{eqn::dist}).
  \begin{proof}
    By Theorem~\ref{thm::edelman}, if $\gamma/2$ is the least singular
    value of $C$, then $n\gamma^2/4$ converges in distribution to
    $\rho(x)$ for large $n$.  Theorem~\ref{thm::edelman} also implies
    that $C$ is non-singular with probability
    one, so $\gamma$ is the ground-state energy gap.
  \end{proof}
\end{thm}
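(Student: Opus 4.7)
The plan is to chain together Theorem~\ref{thm::lieb} and Theorem~\ref{thm::edelman} via the observation, made at the end of Section~\ref{sec::background}, that each $\Lambda_j$ in (\ref{eqn::princip}) is a singular value of $A+B$, so that twice the least singular value of $A+B$ is the ground-state energy gap whenever $A+B$ is non-singular. Since by construction $A+B = (C+C^{\dagger})/2 + (C-C^{\dagger})/2 = C$, the matrix whose smallest singular value governs $\gamma$ is exactly the Gaussian matrix $C$ supplied by the hypothesis. This is the entire bridge: once it is recognized, the two preceding theorems combine almost mechanically.

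First I would let $\varsigma$ denote the least singular value of $C$ and apply Theorem~\ref{thm::edelman} directly to $C$, which has independent N(0,1) entries by hypothesis, to conclude $n\varsigma^2 \to \rho(x)$ in distribution. Next I would use the identification $\gamma = 2\varsigma$ to rewrite this as $n\gamma^2/4 \to \rho(x)$. Before making that substitution, however, I must dispose of the possibility that $C$ is singular, since in that case $2\varsigma = 0$ is merely an internal gap within a degenerate ground-state subspace rather than the first nonzero excitation. This is handled by the remark already recorded after Theorem~\ref{thm::edelman}: the density $\rho$ implies $\varsigma \neq 0$ with probability one, so $C$ is almost surely non-singular and $\gamma = 2\varsigma$ almost surely.

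The main obstacle is really only conceptual bookkeeping: one has to track the factor of $2$ between the least singular value of $A+B$ and the gap (originating from the $2\Lambda_j$ in (\ref{eqn::princip})), and one has to rule out the degenerate-ground-state case by a probabilistic rather than structural argument. Both are already packaged for us by the preceding material, so the proof reduces to a short identification of $C$ with $A+B$, a citation of Edelman's limiting distribution, and the rescaling $\varsigma = \gamma/2$.
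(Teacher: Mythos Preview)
Your proposal is correct and follows essentially the same route as the paper: identify $A+B$ with $C$, invoke the Section~\ref{sec::background} fact that the gap is twice the least singular value of $A+B$, apply Theorem~\ref{thm::edelman} to $C$, and use the almost-sure non-singularity of $C$ to justify calling $2\varsigma$ the ground-state gap rather than a degeneracy. The paper's proof compresses all of this into two sentences, but the logical content is identical to what you have written.
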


Since $n\gamma^2/4$ has a probability distribution that is
asymptotically independent of $n$, $\gamma=\mathcal{O}(1/\sqrt{n})$.
Recalling Theorem~\ref{thm::rarity}, we see this is a remarkable
property. Since there must be $2^n$ distinct energy levels in an
energy range of $\mathcal{O}(n^2)$, most of the energy gaps must be
exponentially small.  In fact it can be shown that the Hamiltonians in
Theorem~\ref{thm::rndgap} are with probability
one non-degenerate, so these exponentially small gaps
are also non-zero.  Figure~\ref{fig::AllGapHist} illustrates the
difference between the distribution of the ground-state energy gaps
and the rest of the gaps for 1000 randomly-generated 10-qubit
Hamiltonians, and indeed the ground-state energy gaps are typically
much larger than the other gaps.

\begin{figure}
\vskip 0.1in
\begin{center}
\includegraphics[height=2.5in]{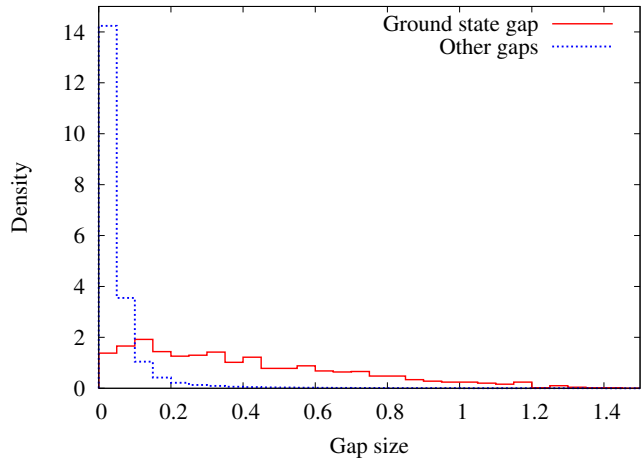}
\end{center}
\caption[Ground-state energy gaps compared to other gaps]{ The
  ground-state energy gap distribution is compared to the distribution
  for the other energy gaps.  All the energy levels are computed for
  1000 random $n=10$ (10-qubit) Hamiltonians.  Each Hamiltonian is
  chosen randomly as described in Theorem~\ref{thm::rndgap}.  As
  predicted by Theorem~\ref{thm::rndgap}, the ground-state energy gaps
  are much larger than the other gaps.}
\label{fig::AllGapHist}
\end{figure}

To establish that quadratic fermionic Hamiltonians as in
(\ref{eqn::quadForm}) have many interaction degrees of freedom
(Requirement~\ref{req::DoF}), we observe that for $n$ qubits there are
$n(n+1)/2$ continuous degrees of freedom in choosing a real symmetric
$n\times n$ matrix $A$, and $n(n-1)/2$ continuous degrees of freedom
in choosing a real anti-symmetric matrix $B$.  The total degrees of
freedom are thus $n^2$.

Finally, it is obvious that if two quadratic fermionic Hamiltonians
are in the form of (\ref{eqn::quadForm}), then their sum is a
quadratic fermionic Hamiltonian in that form
(Requirement~\ref{req::closure}).  We choose, as our initial
Hamiltonian for AQC,
\begin{equation}
  \mathcal{H}_0 = \sum_{j=1}^n \left(2c_j^{\dagger}c_j-I_{2^n}\right) 
   = \sum_{j=1}^n \left(c_j^{\dagger}c_j-c_jc_j^{\dagger}\right) \;.
  \label{eqn::H0}
\end{equation}
The ground state of this Hamiltonian is
easy to construct.  For example, for electrons in a metal, $c_j$ is
the annihilation operator for electron occupation at site $j$, and
then the ground state of $\mathcal{H}_0$ is the state with each site
unoccupied. Then the whole Hamiltonian evolution
$\mathcal{H}(s)$ is a quadratic fermionic Hamiltonian in the form of
(\ref{eqn::quadForm}), where
\begin{equation}
  \mathcal{H}(s)=(1-s)\mathcal{H}_0 + 
  s\sum_{j,k=1}^n 
  \left[A_{j,k}\left(c_j^{\dagger}c_k - c_jc_k^{\dagger}\right)+
  B_{j,k}\left(c_j^{\dagger}c_k^{\dagger} - c_jc_k\right)\right] \;.
  \label{eqn::evolution}
\end{equation}
To find the ground-state energy gap of $\mathcal{H}(s)$, we define
\begin{align}
    \breve{A}(s) &= (1-s)I_{2^n} +sA  \;, \notag \\
    \breve{B}(s) &= s B \;.
    \label{eqn::ABdef}
\end{align}
Then we can rewrite Equation (\ref{eqn::evolution}) as
\begin{equation}
  \mathcal{H}(s)=\sum_{j,k=1}^n 
  \breve{A}_{j,k}(s)\left(c_j^{\dagger}c_k - c_jc_k^{\dagger}\right)+
  \breve{B}_{j,k}(s)\left(c_j^{\dagger}c_k^{\dagger} - c_jc_k\right) \;,
  \label{eqn::ABxform}
\end{equation}
and twice the least non-zero singular value of
$\breve{A}(s)+\breve{B}(s)$ is the ground-state energy gap of
$\mathcal{H}(s)$.

We cannot directly use Theorem~\ref{thm::rndgap} to establish that the
ground-state energy gap is large for {\em all} of $\mathcal{H}(s)$ in
(\ref{eqn::evolution}), since Theorem~\ref{thm::rndgap} is a
probabilistic result for a single random Hamiltonian.  However, for
several special classes of Hamiltonians in (\ref{eqn::evolution}), we
{\em can} establish that the gap is large throughout an evolution.

\section{Hamiltonian evolutions with large ground-state energy gaps}
\label{sec::largegaps}

There are several classes of $A$ and $B$ for which we can easily find
the least singular value of $\breve{A}(s)+\breve{B}(s)$ as defined in
(\ref{eqn::ABdef}), and thus find the minimum ground-state energy gap
of the evolution in (\ref{eqn::evolution}).  We do this for certain
random choices of $A$ and $B$, and also two and three-dimensional
interaction lattices.

If $A+B$ is symmetric positive semi-definite, we can determine the
least singular value of $\breve{A}(s)+\breve{B}(s)$ from the least
singular value of $A+B$.  In that case $B=0$ and thus $\mathcal{H}(1)$
represents a Hubbard model with no on-site interactions.  In order to
choose random samples of symmetric positive semi-definite matrices, we
first choose $C$ to be an $n\times n$ matrix with independent random
N(0,1) elements, as before, and then set $A=CC^{\dagger}$ and $B=0$.

\begin{thm}[Ground-state energy gaps for random Hamiltonian evolutions]
  \label{thm::rndevolve}
  Let $C$ be an $n\times n$ matrix with independent N(0,1)
  coefficients, and let $A=CC^{\dagger}$.  Define
  \begin{equation}
    \mathcal{H}(s)=(1-s)\mathcal{H}_0 + s \sum_{j,k=1}^n 
    A_{j,k}\left(c_j^{\dagger}c_k - c_jc_k^{\dagger}\right)\;.
    \label{eqn::HsymmEvolve}
  \end{equation}
  Then the ground-state energy gap for $\mathcal{H}(s)$ is
  \begin{equation}
    \breve{\gamma}(s) = 2(1-s)+s\gamma \;,
  \end{equation}
  where $n\gamma/2$ converges in distribution for large $n$ to
  $\rho(x)$ defined in (\ref{eqn::dist}).
  \begin{proof}
    Let us label the least singular value of $C$ as $\sqrt{\gamma/2}$.
    Then by Theorem~\ref{thm::edelman}, for large $n$, $n\gamma/2$
    converges in distribution to the density function $\rho(x)$, and
    is non-zero with probability one.
    Since $A=CC^{\dagger}$ is symmetric positive semi-definite,
    $\gamma/2$ is its least singular value.  Thus $\gamma$ is the
    ground-state energy gap of $\mathcal{H}(1)$.

    Define
    \begin{align}
      \breve{A}(s) &= (1-s)I_{n} +sA \;,
    \end{align}
    then the ground-state energy gap $\breve{\gamma}(s)$ of
    $\mathcal{H}(s)$ is twice the least non-zero singular value of
    $\breve{A}(s)$.  Notice $A$ and $I$ are symmetric positive
    semi-definite, and diagonal in the same basis.  Since $\gamma/2$
    is the least eigenvalue of $A$, we have 
    \begin{equation}
      \frac{\breve{\gamma}(s)}{2} = (1-s)
      +s\left(\frac{\gamma}{2}\right) \;.
    \end{equation}
  \end{proof}
\end{thm}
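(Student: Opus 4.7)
The plan is to recognize that the whole evolution $\mathcal{H}(s)$ fits in the quadratic fermionic framework of (\ref{eqn::quadForm}), with coefficient matrices $\breve{A}(s)=(1-s)I+sA$ and $\breve{B}(s)=0$, since $\mathcal{H}_0$ in (\ref{eqn::H0}) is itself of the form (\ref{eqn::quadForm}) with coefficient matrix equal to the identity. By the discussion following Theorem~\ref{thm::lieb}, the ground-state energy gap of $\mathcal{H}(s)$ is then twice the smallest non-zero singular value of $\breve{A}(s)+\breve{B}(s)=\breve{A}(s)$.

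Next I would exploit the positive semi-definite structure. Because $A=CC^{\dagger}$ is symmetric positive semi-definite, its singular values coincide with its eigenvalues, and those eigenvalues are exactly the squares of the singular values of $C$. If $\varsigma$ denotes the least singular value of $C$, then setting $\gamma/2:=\varsigma^{2}$ makes $\gamma/2$ the least eigenvalue of $A$, so $\gamma$ is the ground-state energy gap of $\mathcal{H}(1)$. Applying Theorem~\ref{thm::edelman} to $C$ gives that $n\varsigma^{2}=n\gamma/2$ converges in distribution to $\rho(x)$, and simultaneously that $\varsigma>0$ with probability one, so $A$ is positive definite almost surely.

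For the interior of the evolution, I would use the fact that $I$ and $A$ are both symmetric positive (semi-)definite and, being simultaneously diagonalizable in any basis that diagonalizes $A$, share the same eigenvectors as their convex combination $\breve{A}(s)$. The eigenvalues of $\breve{A}(s)$ are thus $(1-s)+s\lambda_{j}(A)$, and the smallest one is $(1-s)+s(\gamma/2)$, which is strictly positive for $s\in[0,1]$ almost surely. Doubling yields
\begin{equation}
\breve{\gamma}(s)=2(1-s)+s\gamma,
\end{equation}
and combined with the distributional statement from Edelman we are done.

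The proof is essentially a bookkeeping exercise once one notices that $\mathcal{H}_{0}$ slots into the quadratic-fermionic form with coefficient matrix $I$; there is no real obstacle. The only point requiring mild care is to justify that the least \emph{non-zero} singular value of $\breve{A}(s)$ really is $(1-s)+s(\gamma/2)$ (as opposed to some smaller quantity produced by a null eigenvector of $A$), but this is immediate because Edelman's theorem guarantees $A$ is almost surely non-singular, so all eigenvalues of $\breve{A}(s)$ are strictly positive throughout $s\in[0,1]$.
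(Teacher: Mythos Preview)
Your proposal is correct and follows essentially the same approach as the paper: identify $\breve{A}(s)=(1-s)I+sA$ with $\breve{B}(s)=0$, use that the gap is twice the least non-zero singular value of $\breve{A}(s)$, exploit the simultaneous diagonalizability of $I$ and $A$ to read off the eigenvalues as $(1-s)+s\lambda_j(A)$, and invoke Edelman's theorem for the distributional claim and the almost-sure positivity. If anything, your write-up is slightly more explicit than the paper's in justifying why no smaller non-zero singular value can arise from a null direction of $A$.
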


\begin{figure}[!t]
\vskip 0.1in
\begin{center}
\includegraphics[height=3in]{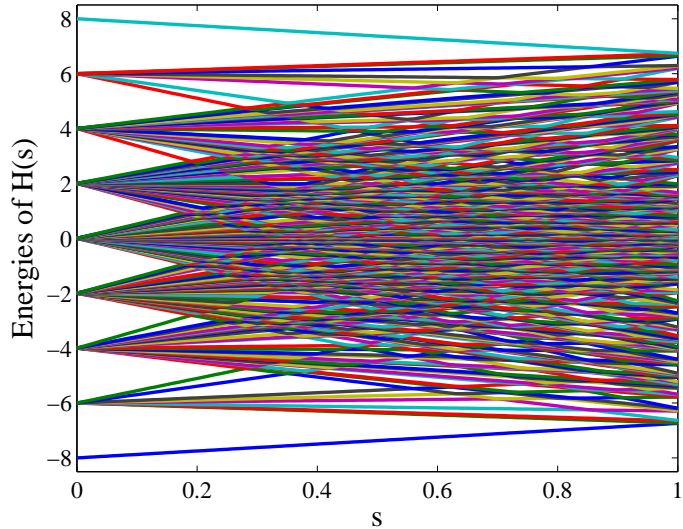}
\end{center}
\caption[Energies for a random Hamiltonian evolution]{Eigenvalues of
  $\mathcal{H}(s)$ as a function of $s$, where $\mathcal{H}(s)$ was
  defined as in Theorem~\ref{thm::rndevolve} with $A=CC^{\dagger}/n$
  for $n=8$.  The division by $n$ is so that $\vnorm{\mathcal{H}(0)}
  \approx \vnorm{\mathcal{H}(1)}$, resulting in a better visualization
  although making the ground-state energy gap $\mathcal{O}(1/n^2)$
  instead of $\mathcal{O}(1/n)$.  We can see that the ground-state
  energy gap is linearly decreasing with $s$ as predicted by
  Theorem~\ref{thm::rndevolve}, and is much larger than most of the
  other energy gaps.}
\label{fig::evolve}
\end{figure}

Figure~\ref{fig::evolve} illustrates the energy levels for an 8-qubit
instance of a random evolution as in (\ref{eqn::HsymmEvolve}).
Evidently the ground-state energy gap is larger than the other gaps
throughout the evolution.

If $A+B$ is not symmetric, then the eigenvalues of $G(s) =
(\breve{A}(s)+\breve{B}(s)) (\breve{A}(s)-\breve{B}(s))$ are of
interest.  We explore three classes of matrices where finding these
eigenvalues is easy, namely, generalizations of the one-dimensional XY
model, and generalizations of two and three-dimensional lattices of
interacting fermions.

The one-dimensional XY model is expressed with $A$ and $B$ matrices
such as \cite[p. 413]{lieb61}:
\begin{align}
  A &= \frac{1}{2}\left( \begin{array}{ccccc}
   0 & 1 & \hphantom{-1} & \hphantom{-1} & 1 \\
   1 & 0 & 1 & & \hphantom{-1}  \\
       & 1 & 0 & \cdot & \\
   \hphantom{-1}  & \hphantom{-1} & \cdot & \cdot & 1 \\
     1 & & & 1 & 0
  \end{array}\right) \;, & 
  B &= \frac{1}{2}\left( \begin{array}{ccccc}
   0 & 1 & \hphantom{-1} & \hphantom{-1} & -1 \\
   -1 & 0 & 1 & & \hphantom{-1}  \\
       & -1 & 0 & \cdot & \\
   \hphantom{-1}  & \hphantom{-1} & \cdot & \cdot & 1 \\
     1 & & & -1 & 0
  \end{array}\right) \;,
  \label{eqn::exCirc}
\end{align}
where omitted entries are zero.  The ground state of the
one-dimensional XY model can be found in polynomial time with AQC
\cite{zurek05,schutzhold06}.  In fact, this holds for more general
choices of $A$ and $B$.  An essential property of the definitions in
(\ref{eqn::exCirc}) is that each row is a cyclic shift of the previous
row.  Such matrices are called {\em circulant}.  For $n$ qubits, it is
easy to check that there are $n$ degrees of freedom in choosing a
symmetric circulant matrix $A$ and anti-symmetric circulant matrix
$B$.  We show that AQC can find the ground state of any Hamiltonian
with circulant $A$ and $B$ matrices in polynomial time.

\begin{thm}[Ground-state energy gaps for circulant $A$ and $B$ matrices]
  \label{thm::circulant}
  Let
  \begin{equation}
    \mathcal{H}= \sum_{j,k=1}^n 
    \left(A_{j,k}\left(c_j^{\dagger}c_k - c_jc_k^{\dagger}\right)+
    B_{j,k}\left(c_j^{\dagger}c_k^{\dagger} - c_jc_k\right)\right) \;,
    \label{eqn::circulantH}
  \end{equation}
  where $A$ is a real circulant $n\times n$ symmetric matrix, and $B$
  is a real circulant anti-symmetric $n\times n$ matrix.  Then the
  ground-state energy gap of $\mathcal{H}$ is bounded from below by a
  polynomial in $1/n$.
  \begin{proof}
    Circulant matrices form a commutative ring \cite[p. 201]{golub96},
    so if $A$ and $B$ are circulant, then so is $G=(A+B)(A-B)$.  Also,
    $G$ is symmetric positive semi-definite, and the ground-state
    energy gap of $\mathcal{H}$ is twice the square root of the least
    non-zero eigenvalue of $G$.

    Circulant matrices also have the nice property that their
    eigenvalues are given by the discrete Fourier transform of their
    first column \cite[p. 124]{chu00}.
    The $n\times n$ Fourier transform matrix is
    \begin{equation}
      F_n =       \frac{1}{\sqrt{n}}
      \left( \begin{array}{cccccc}
	e^{(0\cdot 0)2\pi i/n} & e^{(0\cdot 1)2\pi i/n}  &
	e^{(0\cdot 2)2\pi i/n} & ...\\
	e^{(1\cdot 0)2\pi i/n} & e^{(1\cdot 1)2\pi i/n}  & 
	e^{(1\cdot 2)2\pi i/n} & ...\\
	e^{(2\cdot 0)2\pi i/n} & e^{(2\cdot 1)2\pi i/n}  & 
	e^{(2\cdot 2)2\pi i/n} & ...\\
	... &&&\\
	e^{((n-1)1\cdot 0)2\pi i/n} & e^{((n-1)\cdot 1)2\pi i/n}  & 
	e^{((n-1)\cdot 2)2\pi i/n} & ...
      \end{array}\right) \;.
      \label{eqn::FFT}
    \end{equation}
    Recall we label the eigenvalues of $G$ as $\Lambda_k^2$.  Labeling
    the entries in the first column of $G$ as $g_k$, we can write:
    \begin{align}
      \left( \begin{array}{c}
	\Lambda_1^2 \\
	\Lambda_2^2 \\
	\Lambda_3^2 \\
	...\\
	\Lambda_{n}^2
      \end{array}\right)
      &= F_n
      \left( \begin{array}{c}
	g_1 \\
	g_2 \\
	g_3 \\
	...\\
	g_{n}
      \end{array}\right) \;.
    \end{align}
    Also, $G$ is symmetric so $g_k = g_{n+2-k}$ for $k\geq 2$.  Then
    we have, for $n$ odd:
    \begin{align}
      \Lambda_k^2 &= \frac{1}{\sqrt{n}} \left[ g_1 + \sum_{j=2}^{(n+1)/2}
      g_j\left( e^{((j-1)\cdot (k-1))2\pi i/n} + e^{((n-j+1)\cdot (k-1))2\pi
	i/n}\right)\right] \\
      &= \frac{1}{\sqrt{n}} \left[ g_1 +
      \sum_{j=2}^{(n+1)/2} g_j\left( e^{((j-1)\cdot (k-1))2\pi i/n} 
      + e^{-((j-1)\cdot
	(k-1))2\pi i/n}\right)\right] \\
      &= \frac{1}{\sqrt{n}} \left[ g_1 +
      \sum_{j=2}^{(n+1)/2} 
         g_j\cos\left(\frac{2\pi (j-1)(k-1)}{n}\right)\right] \;.
    \end{align}
    For $n$ even we have a leftover term, but it simplifies:
    \begin{align}
      \Lambda_k^2 &= 
      \frac{1}{\sqrt{n}} \left[
      g_1 + g_{n/2+1}e^{((n/2)\cdot (k-1))2\pi i/n}
      + \sum_{j=2}^{n/2} 
        g_j\cos\left(\frac{2\pi (j-1)(k-1)}{n}\right)  \right] \\
      &= \frac{1}{\sqrt{n}} \left[
      g_1 + (-1)^{(k-1)} g_{n/2+1} +
      \sum_{j=2}^{n/2} 
         g_j\cos\left(\frac{2\pi (j-1)(k-1)}{n}\right)
      \right] \;.
    \end{align}
    For $n\geq 1$, we have $1/n \leq 1/\sqrt{n}$.  So whether $n$ is
    even or odd, Taylor expansion of the
    cosine makes it clear that if $\Lambda_k\neq 0$ then $\Lambda_k$
    is bounded from below by a
    polynomial in $1/n$.  So the ground-state energy gap of
    $\mathcal{H}$ is bounded from below by a polynomial in $1/n$.
  \end{proof}
\end{thm}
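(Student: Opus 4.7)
The plan is to reduce the gap of $\mathcal{H}$ to a spectral question about a single $n \times n$ circulant matrix and then attack it with the discrete Fourier transform. By the discussion following Theorem~\ref{thm::lieb}, the gap equals twice the least nonzero singular value of $A+B$, equivalently twice the square root of the least nonzero eigenvalue of $G := (A+B)(A-B)$. Because $A$ is symmetric and $B$ anti-symmetric, $A-B = (A+B)^{\dagger}$, so $G = (A+B)(A+B)^{\dagger}$ is real, symmetric, and positive semi-definite. The circulant matrices are closed under addition and multiplication (they form a commutative ring), so $G$ is itself circulant.

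Next I would invoke the standard fact that every circulant matrix is diagonalized by the DFT matrix $F_n$; the eigenvalues $\Lambda_k^2$ of $G$ are therefore (up to DFT normalization) the Fourier transform of its first column $(g_1, \ldots, g_n)$. The symmetry $g_k = g_{n+2-k}$ for $k \geq 2$, inherited from $G = G^{\top}$, pairs $e^{2\pi i(j-1)(k-1)/n}$ with its conjugate, collapsing the complex sum into a real cosine expansion of the form
\begin{equation*}
\Lambda_k^2 \;=\; \tfrac{1}{\sqrt{n}}\Bigl[\,g_1 + \textstyle\sum_{j\geq 2} 2 g_j \cos\!\bigl(2\pi(j-1)(k-1)/n\bigr)\Bigr],
\end{equation*}
with a minor parity-dependent adjustment for even $n$ (the unpaired middle term contributing $(-1)^{k-1} g_{n/2+1}$).

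The remaining step, which I expect to be the main obstacle, is to show that whenever $\Lambda_k \neq 0$ it is bounded below by a polynomial in $1/n$. The strategy is to view the cosine sum as a trigonometric polynomial in $k$ treated as a real variable, and to Taylor-expand around the integer $k^\star$ closest to a value where the polynomial vanishes. Since the coefficients $g_j$ arise from the fixed circulant structure of $A$ and $B$ and are bounded, each cosine contributes a power series in $(k - k^\star)/n$ whose leading nonzero term is at most polynomial-in-$1/n$ small. This gives a polynomial lower bound on $\Lambda_k^2$, hence on $\Lambda_k$ (losing a square root) and thus on the gap $2\Lambda_k$.
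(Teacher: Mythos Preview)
Your proposal follows essentially the same route as the paper's proof: reduce the gap to twice the square root of the least nonzero eigenvalue of the circulant, symmetric positive semi-definite matrix $G=(A+B)(A-B)$; diagonalize $G$ by the DFT; exploit the symmetry $g_k=g_{n+2-k}$ to collapse the Fourier sum into a cosine expansion (with the parity-dependent middle term for even $n$); and then invoke a Taylor-expansion argument on the cosines to conclude the polynomial lower bound. The paper's final step is stated just as informally as yours---it simply asserts that ``Taylor expansion of the cosine makes it clear'' that nonzero $\Lambda_k$ are polynomially bounded below---so your sketch is at the same level of rigor (indeed slightly more explicit, and you correctly retain the factor of $2$ in front of $g_j$ that the paper drops).
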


We can extend this result to a whole Hamiltonian evolution.
\begin{cor}[Hamiltonian evolutions with circulant $A$ and $B$ matrices]
  \label{cor::circulant}
  Let
  \begin{equation}
    \mathcal{H}(s)=(1-s)\mathcal{H}_0 + 
    s\sum_{j,k=1}^n 
    \left(A_{j,k}\left(c_j^{\dagger}c_k - c_jc_k^{\dagger}\right)+
    B_{j,k}\left(c_j^{\dagger}c_k^{\dagger} - c_jc_k\right)\right) \;,
    \label{eqn::circulantEv}
  \end{equation}
  where $A$ is a real circulant $n\times n$ symmetric matrix, and $B$
  is a real circulant anti-symmetric $n\times n$ matrix.  Then the
  ground-state energy gap of $\mathcal{H}(s)$ is bounded from below by a
  polynomial in $1/n$ and $s$.
  \begin{proof}
      First rewrite $\mathcal{H}(s)$ as in (\ref{eqn::ABxform}).  The
      elements of the matrix 
      \begin{equation}
	\breve{G}(s)=\left(\breve{A}(s)+\breve{B}(s)\right)
	\left(\breve{A}(s)-\breve{B}(s)\right)
      \end{equation}
      are quadratic functions of $s$, and $\breve{A}(s)$ and
      $\breve{B}(s)$ are circulant.  Then we can follow the proof of
      Theorem~\ref{thm::circulant} to see that the eigenvalues of
      $\breve{G}(s)$ are bounded by a polynomial in $s$ and $1/n$,
      and so the ground-state energy gap of $\mathcal{H}(s)$ is
      bounded from below by a polynomial in $1/n$ and $s$.
  \end{proof}
\end{cor}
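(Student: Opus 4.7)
The plan is to reduce this corollary to an $s$-parameterized version of Theorem~\ref{thm::circulant} by exploiting the fact that the identity matrix is the simplest circulant matrix. First I would rewrite $\mathcal{H}(s)$ in the standard quadratic fermionic form (\ref{eqn::ABxform}) using $\breve{A}(s)=(1-s)I_n+sA$ and $\breve{B}(s)=sB$. Since $I_n$ is circulant and the circulants form a ring, $\breve{A}(s)$ is real symmetric circulant and $\breve{B}(s)$ is real anti-symmetric circulant for every $s\in[0,1]$. Thus the problem reduces to bounding the least non-zero singular value of a sum of two circulant matrices whose entries depend polynomially (in fact, linearly) on $s$.

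Next I would form $\breve{G}(s)=(\breve{A}(s)+\breve{B}(s))(\breve{A}(s)-\breve{B}(s))$, which by the ring property is again symmetric circulant and positive semi-definite. Each entry of $\breve{G}(s)$ is a quadratic polynomial in $s$ with coefficients that are simple linear combinations of entries of $A$ and $B$. Then I would invoke the DFT diagonalization used in Theorem~\ref{thm::circulant}: the eigenvalues $\breve{\Lambda}_k^2(s)$ of $\breve{G}(s)$ are given by $F_n$ applied to the first column of $\breve{G}(s)$, yielding expressions of the form $\frac{1}{\sqrt{n}}\bigl[\breve{g}_1(s)+\sum_j \breve{g}_j(s)\cos(2\pi(j-1)(k-1)/n)\bigr]$ (with the even-$n$ boundary term as in the theorem).

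The argument of Theorem~\ref{thm::circulant} — Taylor expansion of the cosine showing any non-zero $\breve{\Lambda}_k(s)$ is bounded from below by a polynomial in $1/n$ — then transfers directly, but now the coefficients $\breve{g}_j(s)$ are themselves polynomials in $s$. Combining, $\breve{\Lambda}_k(s)$ either vanishes or is bounded below by a polynomial in $1/n$ and $s$, and the ground-state energy gap of $\mathcal{H}(s)$ is twice the least non-zero $\breve{\Lambda}_k(s)$.

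The main obstacle I expect is ensuring that the $s$-parameter does not cause a previously non-zero eigenvalue to vanish at some intermediate $s^{*}\in(0,1)$, which would break the lower bound. This is controlled by the fact that at $s=0$ we have $\breve{G}(0)=I_n$, so all eigenvalues are $1$; as $s$ moves from $0$ toward $1$ each $\breve{\Lambda}_k^2(s)$ is a polynomial in $s$ with $\breve{\Lambda}_k^2(0)=1$, so the only way to hit $0$ on $[0,1]$ is through a polynomial root structure that can be folded into the same DFT-expansion bound. A clean way to close this is to observe that the Fourier eigenbasis of circulants is $s$-independent, so each eigenvalue evolves as its own scalar polynomial in $s$ and its magnitude is bounded below by a polynomial in $1/n$ and $s$ uniformly, completing the proof.
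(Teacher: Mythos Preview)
Your proposal is correct and follows essentially the same route as the paper's own proof: rewrite $\mathcal{H}(s)$ in the form (\ref{eqn::ABxform}), observe that $\breve{A}(s)$ and $\breve{B}(s)$ remain circulant (since $I_n$ is circulant), note the entries of $\breve{G}(s)$ are quadratic in $s$, and then rerun the DFT argument of Theorem~\ref{thm::circulant}. Your extra paragraph about the $s$-independence of the Fourier eigenbasis and the possibility of an eigenvalue vanishing at some intermediate $s^{*}$ is a useful clarification that the paper leaves implicit, but it does not change the argument's structure.
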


Circulant coefficient matrices other than the XY model include
scenarios such as non-nearest neighbor interactions on a
one-dimensional chain of interacting fermions.  The restriction that
$A$ and $B$ are circulant imposes the requirement that the interaction
strengths are independent of position, and it imposes periodic
boundary conditions.  It should be noted that while circulant matrices
yield to elegant analysis, these results could be extended to
interaction matrices derived from other boundary conditions.  For
instance, if $G$ is symmetric Toeplitz tridiagonal
(Toeplitz matrices are those with constant diagonals), then its
eigenvalues may be found analytically
\cite[p. 158]{elaydi96}. Analysis of the two and
three-dimensional interaction grids then builds on the one-dimensional
analysis in exact analogy to the circulant case.

Now let us consider the case of a two-dimensional lattice of
interacting fermions with periodic boundary conditions.  The $A$ and
$B$ matrices then have a block structure such as 
\begin{align}
  A &= \left( \begin{array}{ccccc}
      A_0 & I     &       &       & I \\
      I   & A_0   & I     &       & \\
          & \cdot & \cdot & \cdot & \\
          &       &  I    & A_0   & I \\
      I   &       &       & I     & A_0 \\
    \end{array}\right) \;, &
  B &= \left( \begin{array}{ccccc}
      B_0 & I     &       &       & -I \\
      -I   & B_0   & I     &       & \\
          & \cdot & \cdot & \cdot & \\
          &       &  -I    & B_0   & I \\
      I   &       &       & -I     & B_0 \\
    \end{array}\right) \;, 
  \label{eqn::exBCCB}
\end{align}
where $B_0$ and $A_0$ are as in (\ref{eqn::exCirc}).  Evidently each
block is circulant, and $A$ and $B$ are circulant in the blocks.  Such
matrices are called ``block circulant with circulant blocks'' (BCCB).
Let us assume that the blocks are $p\times p$, and there are $q$
blocks per row, so $n=pq$.

\begin{thm}[Ground-state energy gaps for BCCB $A$ and $B$ matrices]
  \label{thm::BCCB}
  Let $A$ and $B$ be $n\times n$ BCCB matrices with $p\times p$ blocks
  and $q$ blocks per row, so $pq=n$.  Assume $A$ is real symmetric and
  $B$ is real anti-symmetric.  Define the quadratic fermionic
  Hamiltonian $\mathcal{H}$ with $A$ and $B$ as in
  (\ref{eqn::quadForm}).  Then the ground-state energy gap of
  $\mathcal{H}$ is bounded from below by a polynomial in $1/n$.

  \begin{proof}
    It is easy to check $G=(A+B)(A-B)$ is BCCB, and since
    $G$ is symmetric, it is
    block symmetric and its blocks are
    symmetric.  Then $T^{\dagger}GT$ is diagonal
    \cite{swarztrauber77}, where
    \begin{equation}
      T = \left( \begin{array}{ccc}
	  F_p &     &  \\
	      & F_p &  \\
  	      &     & \ddots       
	\end{array}\right) P_y 
      \left( \begin{array}{ccc}
	  F_q &     &  \\
	      & F_q &  \\
  	      &     & \ddots       
	\end{array}\right) \;,
      \label{eqn::defT}
    \end{equation}
    and $P_y$ is a permutation matrix that re-orders the columns as
    $(1,\; p+1,\; 2p+1,\; \dots,\; (q-1)p+1,\; 2,\; p+2,\; 2p+2,\;
    \dots)$.  The matrix $T$ diagonalizes $G$ by first diagonalizing
    the blocks of $G$, then reordering the rows and columns so that
    the matrix is block diagonal with circulant blocks, and finally
    diagonalizing those blocks.  Geometrically, this procedure can be
    thought of as a Fourier transform first along the horizontal axis
    of the lattice and then along the vertical axis.
    
    The eigenvalues of $G$ are then the eigenvalues of symmetric
    circulant matrices, whose entries are the eigenvalues of symmetric
    circulant matrices.  Following the proof of
    Theorem~\ref{thm::circulant}, we can see that the non-zero
    eigenvalues of $G$ are bounded from below by a polynomial in $1/n$.
  \end{proof}
\end{thm}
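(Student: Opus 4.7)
The plan is to reduce the BCCB case to two nested applications of the circulant argument from Theorem~\ref{thm::circulant}. First, I would invoke the standing identification from Section~\ref{sec::background}: since $A$ is symmetric and $B$ is anti-symmetric, $A-B=(A+B)^{\dagger}$, so $G=(A+B)(A-B)=(A+B)(A+B)^{\dagger}$ is symmetric positive semi-definite, its non-zero eigenvalues are $\Lambda_j^2$ from (\ref{eqn::princip}), and twice the square root of the least non-zero eigenvalue of $G$ is the ground-state energy gap of $\mathcal{H}$. Thus it suffices to lower-bound the least non-zero eigenvalue of $G$ by a polynomial in $1/n$.

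Next I would establish that $G$ is itself BCCB. The BCCB matrices form a commutative subring of $n\times n$ matrices: viewed as circulants with entries in the ring of $p\times p$ circulants, sums and products remain block circulant with circulant blocks. In particular $A+B$ and $A-B$ are BCCB, hence so is $G$. Because $G$ is symmetric, its diagonal blocks are symmetric circulants and the off-diagonal blocks satisfy the block-symmetry condition $G_{jk}=G_{kj}^{\dagger}=G_{kj}$, so the first block-row of $G$ is block-palindromic (just as the first row of a symmetric circulant is palindromic).

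Then I would apply the diagonalization $T^{\dagger}GT$ with $T$ as in (\ref{eqn::defT}). The action of $T$ is to perform a one-dimensional Fourier transform within each block, permute so that the matrix becomes block-diagonal with $p$ circulant blocks of size $q$, and then Fourier transform each of those blocks. Writing $g_{j,k}$ for the entries of the first block-column of the first block-row of $G$, the eigenvalues of $G$ take the product form
\begin{equation*}
  \Lambda_{r,s}^{2}=\frac{1}{\sqrt{pq}}\sum_{j=1}^{p}\sum_{k=1}^{q}
  g_{j,k}\,e^{2\pi i(j-1)(r-1)/p}\,e^{2\pi i(k-1)(s-1)/q}\;.
\end{equation*}
Using the two palindromic symmetries to pair conjugate exponentials, exactly as in the derivation of the cosine expressions in the proof of Theorem~\ref{thm::circulant}, each $\Lambda_{r,s}^{2}$ reduces to a real linear combination of products of cosines of the form $\cos(2\pi(j-1)(r-1)/p)\cos(2\pi(k-1)(s-1)/q)$, with a possible isolated $(-1)$-parity term when $p$ or $q$ is even.

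The remaining step is the polynomial lower bound. Taylor-expanding each cosine about its nearest integer multiple of $\pi$ shows that any non-zero value of such a double cosine sum is bounded below by a polynomial in $1/p$ and $1/q$, and since $p,q\leq n$ this is a polynomial in $1/n$. The main obstacle is bookkeeping: one must check that the two palindromic symmetries (within blocks and across blocks) really do collapse all complex exponentials into real cosines, and that the Taylor argument in Theorem~\ref{thm::circulant} continues to give a polynomial bound after the double summation; once this is set up, the conclusion follows exactly as in the one-dimensional case.
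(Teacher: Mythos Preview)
Your proposal is correct and follows essentially the same route as the paper: show $G=(A+B)(A-B)$ is BCCB and symmetric, diagonalize it by the two-stage Fourier transform $T$, express the eigenvalues as nested circulant Fourier sums, collapse to cosines via the block and intra-block palindromic symmetries, and then invoke the Taylor-expansion lower bound from Theorem~\ref{thm::circulant}. You are slightly more explicit than the paper about the double-sum eigenvalue formula and the cosine reduction (and your description of $g_{j,k}$ as entries ``of the first block-column of the first block-row'' should really read ``first column of each block in the first block-column''), but the argument is the same.
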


We can extend this result to a whole Hamiltonian evolution:
\begin{cor}[Hamiltonian evolutions with BCCB $A$ and $B$ matrices]
  \label{cor::BCCB}
  Let
  \begin{equation}
    \mathcal{H}(s)=(1-s)\mathcal{H}_0 + 
    s\sum_{j,k=1}^n 
    \left[A_{j,k}\left(c_j^{\dagger}c_k - c_jc_k^{\dagger}\right)+
    B_{j,k}\left(c_j^{\dagger}c_k^{\dagger} - c_jc_k\right)\right] \;,
  \end{equation}
  where $A$ is a real BCCB $n\times n$ symmetric matrix, and $B$
  is a real BCCB anti-symmetric $n\times n$ matrix.  Then the
  ground-state energy gap of $\mathcal{H}(s)$ is bounded from below by a
  polynomial in $1/n$ and $s$.
  \begin{proof}
    The proof is analogous to the proof of
    Corollary~\ref{cor::circulant}.
  \end{proof}
\end{cor}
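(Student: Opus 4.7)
The plan is to mirror the proof of Corollary~\ref{cor::circulant} step by step, substituting BCCB structure for circulant structure at each point. First I would rewrite $\mathcal{H}(s)$ in the form of (\ref{eqn::ABxform}), with $\breve{A}(s) = (1-s)I_n + sA$ and $\breve{B}(s) = sB$. Since $I_n$ is itself BCCB (each $p\times p$ diagonal block is $I_p$ and the off-diagonal blocks are zero), $\breve{A}(s)$ is a real symmetric BCCB matrix and $\breve{B}(s)$ is a real anti-symmetric BCCB matrix for every $s \in [0,1]$.

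Next I would form $\breve{G}(s) = (\breve{A}(s)+\breve{B}(s))(\breve{A}(s)-\breve{B}(s))$. The central algebraic fact is that the $n \times n$ BCCB matrices (with the fixed block structure of $p \times p$ blocks, $q$ blocks per row) are exactly the matrices simultaneously diagonalized by the unitary $T$ from (\ref{eqn::defT}); in particular they form a commutative algebra closed under sums and products. Consequently $\breve{G}(s)$ is BCCB, and by the same calculation used for $G$ in Theorem~\ref{thm::BCCB} it is symmetric. Its entries are quadratic polynomials in $s$.

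Applying $T$ then diagonalizes $\breve{G}(s)$ for every $s$, so the eigenvalues are the two-dimensional Fourier sums over the first column of $\breve{G}(s)$. Because that first column is a quadratic polynomial in $s$ whose coefficients are built from $A$, $B$, and $I_n$, the eigenvalues themselves are quadratic polynomials in $s$ whose coefficients are the double cosine sums analyzed in Theorem~\ref{thm::circulant}. Applying the same Taylor-expansion argument as there, every non-zero eigenvalue of $\breve{G}(s)$ is bounded below by a polynomial in $1/n$ and $s$, and twice the square root of the least such eigenvalue is the ground-state energy gap of $\mathcal{H}(s)$.

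The main point to check---and the only place the analogy with Corollary~\ref{cor::circulant} could fail---is the joint $s$ and $1/n$ behavior for those eigenvalues that vanish at $s=0$. Since $T$ does not depend on $s$ and $\breve{A}(s), \breve{B}(s)$ depend linearly on $s$, the spectrum inherits this polynomial $s$-dependence termwise through the diagonalization, so an eigenvalue that vanishes at $s=0$ picks up an overall factor polynomial in $s$ while the cosine-expansion step continues to supply the polynomial factor in $1/n$. This gives the desired joint lower bound and completes the argument.
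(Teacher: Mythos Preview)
Your proposal is correct and follows exactly the route the paper intends: the paper's own proof is the single line ``The proof is analogous to the proof of Corollary~\ref{cor::circulant},'' and you have faithfully spelled out that analogy by noting $I_n$ is BCCB, that BCCB matrices are closed under sums and products (being simultaneously diagonalized by $T$), and then invoking the eigenvalue analysis of Theorem~\ref{thm::BCCB} with the quadratic $s$-dependence carried through. Your final paragraph about eigenvalues vanishing at $s=0$ is overly cautious---at $s=0$ one has $\breve{G}(0)=I_n$ with all eigenvalues equal to $1$---but it does no harm.
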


We can even do the three-dimensional lattice of interacting fermions.
Then the $A$ and $B$ matrices are as in (\ref{eqn::exBCCB}), but $A_0$
and $B_0$ are BCCB instead of circulant.  Let us call these matrices
``block circulant with BCCB blocks'', or
(BC)$^2$CB.

\begin{thm}[Ground-state energy gaps for (BC)$^2$CB $A$ and $B$ matrices]
  \label{thm::BC2CB}
  Let $A$ and $B$ be $n\times n$ (BC)$^2$CB matrices.  Let the number
  of blocks be $r$, and the BCCB blocks contain $q$ circulant
  subblocks each $p\times p$, so $n=pqr$.  Define the quadratic
  fermionic Hamiltonian $\mathcal{H}$ with $A$ and $B$ as in
  (\ref{eqn::quadForm}).  Then the ground-state energy gap of
  $\mathcal{H}$ is bounded from below by a polynomial in $1/n$.

  \begin{proof}
    The proof is analogous to that of Theorem~\ref{thm::BCCB}, but we
    set
    \begin{equation}
      T = \left( \begin{array}{ccc}
	  F_p &     &  \\
	      & F_p &  \\
  	      &     & \ddots       
	\end{array}\right) P_y 
      \left( \begin{array}{ccc}
	  F_q &     &  \\
	      & F_q &  \\
  	      &     & \ddots       
	\end{array}\right) P_z 
      \left( \begin{array}{ccc}
	  F_r &     &  \\
	      & F_r &  \\
  	      &     & \ddots       
	\end{array}\right) \;,
      \label{eqn::defT3}
    \end{equation}
    where $P_z$ is a permutation matrix that re-orders the columns as
    $(1,\; pq+1,\; 2pq+1,\; \dots,\; (r-1)pq+1,\; 2,\; pq+2,\;
    2pq+2,\; \dots)$.  This transformation first diagonalizes the BCCB
    blocks, then permutes rows and columns to obtain a block-diagonal
    matrix with circulant blocks, and diagonalizes the remaining
    blocks.
  \end{proof}
\end{thm}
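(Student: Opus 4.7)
The plan is to mirror the proof of Theorem~\ref{thm::BCCB}, adding one further level of nesting. First set $G=(A+B)(A-B)$. Since $A^{\dagger}=A$ and $B^{\dagger}=-B$ give $(A+B)^{\dagger}=A-B$, we have $G=(A+B)(A+B)^{\dagger}$, so $G$ is symmetric positive semi-definite, and the ground-state energy gap of $\mathcal{H}$ is twice the square root of the least non-zero eigenvalue of $G$. Also, (BC)$^2$CB matrices are closed under multiplication (apply the commutative-ring property of circulant matrices at each of the three nesting levels), so $G$ is itself (BC)$^2$CB.

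Next, verify that $T^{\dagger}GT$ is diagonal for $T$ as in~(\ref{eqn::defT3}). This is a three-stage generalization of the Swarztrauber procedure used in Theorem~\ref{thm::BCCB}. The rightmost block-Fourier factor diagonalizes the outermost circulant structure (the $r$ outer blocks of size $pq\times pq$), producing a block-diagonal matrix whose diagonal blocks are themselves BCCB of size $pq\times pq$. The permutation $P_z$ groups together corresponding components across these diagonal blocks so that the middle factor $\mathrm{diag}(F_q,F_q,\ldots)$ acts as a block-Fourier transform on the remaining $q$-level circulant structure, leaving a matrix that is block-diagonal with circulant $p\times p$ blocks after $P_y$. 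Finally, the leftmost $\mathrm{diag}(F_p,F_p,\ldots)$ diagonalizes those innermost circulant blocks. Geometrically, $T$ is a Fourier transform first along one axis of the three-dimensional lattice, then along the second, and then along the third.

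Once $G$ is diagonalized, each eigenvalue $\Lambda^2$ is indexed by a triple $(k_1,k_2,k_3)$ with $1\le k_1\le p$, $1\le k_2\le q$, $1\le k_3\le r$, and is a triple discrete cosine sum of the entries of $G$, exactly analogous to the single sum that appeared in Theorem~\ref{thm::circulant} and the double sum in Theorem~\ref{thm::BCCB}. Taylor expansion of each cosine about the zero-frequency mode shows that any non-zero $\Lambda^2$ is bounded below by a polynomial in $1/p$, $1/q$, and $1/r$, and hence in $1/n$ since $n=pqr$. Twice the square root of the smallest such non-zero eigenvalue is the ground-state energy gap of $\mathcal{H}$, which is therefore bounded below by a polynomial in $1/n$.

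The main obstacle will be the bookkeeping of the permutations $P_y$ and $P_z$ in the three-level setting: one must check that after each block-Fourier stage the permutation correctly regroups coordinates so that the next stage sees a direct sum of matrices one level less nested. This is essentially the Kronecker-product identity $(F_r\otimes F_q\otimes F_p)$ acting on a triply-nested circulant, written out via appropriate strided permutations, and it is conceptually routine once the two-level Swarztrauber factorization of \cite{swarztrauber77} is adopted as a lemma; no new analytic input beyond that in Theorem~\ref{thm::BCCB} is required.
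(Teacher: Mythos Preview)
Your proposal is correct and follows exactly the paper's route: the paper too argues by direct analogy with Theorem~\ref{thm::BCCB}, writing down the same three-factor transformation $T$ and interpreting it as a nested Fourier transform that reduces the eigenvalue question to the circulant case. The only discrepancy is in the narrative of which stage acts first: in $T^{\dagger}GT$ the leftmost factor $\mathrm{diag}(F_p)$ is the innermost conjugation, so the paper (correctly) describes $T$ as first diagonalizing the inner BCCB blocks and handling the outer $r$-level last, opposite to your outermost-first description---but since, as you yourself note, $T$ is essentially $F_r\otimes F_q\otimes F_p$ up to permutations, this does not affect the substance of the argument.
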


We can extend this result to a whole Hamiltonian evolution:
\begin{cor}[Hamiltonian evolutions with (BC)$^2$CB $A$ and $B$ matrices]
  \label{cor::BCBCCB}
  Let
  \begin{equation}
    \mathcal{H}(s)=(1-s)\mathcal{H}_0 + 
    s\sum_{j,k=1}^n 
    \left[A_{j,k}\left(c_j^{\dagger}c_k - c_jc_k^{\dagger}\right)+
    B_{j,k}\left(c_j^{\dagger}c_k^{\dagger} - c_jc_k\right)\right] \;,
  \end{equation}
  where $A$ is a real (BC)$^2$CB $n\times n$ symmetric matrix, and $B$
  is a real (BC)$^2$CB anti-symmetric $n\times n$ matrix.  Then the
  ground-state energy gap of $\mathcal{H}(s)$ is bounded from below by a
  polynomial in $1/n$ and $s$.
  \begin{proof}
    The proof is analogous to that of
    Corollary~\ref{cor::circulant}.
  \end{proof}
\end{cor}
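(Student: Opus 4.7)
The plan is to follow the template of Corollary~\ref{cor::circulant} using the three-axis Fourier diagonalizer $T$ from (\ref{eqn::defT3}) in place of the one-dimensional DFT. As a first step, I rewrite $\mathcal{H}(s)$ in the form (\ref{eqn::ABxform}) with
\begin{align*}
\breve{A}(s) &= (1-s)I + sA, &
\breve{B}(s) &= sB.
\end{align*}
Since the identity matrix is trivially (BC)$^2$CB and real linear combinations of (BC)$^2$CB matrices remain (BC)$^2$CB, $\breve{A}(s)$ stays real symmetric (BC)$^2$CB and $\breve{B}(s)$ stays real anti-symmetric (BC)$^2$CB for every $s\in[0,1]$.

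Next I form $\breve{G}(s) = (\breve{A}(s)+\breve{B}(s))(\breve{A}(s)-\breve{B}(s))$. All (BC)$^2$CB matrices are simultaneously diagonalized by the unitary $T$ of (\ref{eqn::defT3}), so the class is closed under multiplication and $\breve{G}(s)$ is itself (BC)$^2$CB. Moreover, its entries are polynomials in $s$ of degree at most two. Applying the same $T$ that was used in Theorem~\ref{thm::BC2CB} therefore diagonalizes $\breve{G}(s)$ for every $s$ simultaneously, and the diagonal entries arise from the three-dimensional DFT acting on a representative column of quadratic polynomials in $s$.

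Finally, I chain the Taylor-expansion-of-cosines step from the proof of Theorem~\ref{thm::circulant} through each of the three nested Fourier directions, exactly as in Theorem~\ref{thm::BC2CB}, to conclude that every non-zero eigenvalue of $\breve{G}(s)$ is bounded from below by a polynomial in $1/p$, $1/q$, $1/r$, and $s$, hence by a polynomial in $1/n$ and $s$. Since the ground-state energy gap of $\mathcal{H}(s)$ is twice the square root of the least non-zero eigenvalue of $\breve{G}(s)$, the claimed bound follows. The subtlest point will be making the polynomial lower bound hold uniformly over $s\in[0,1]$ rather than only pointwise; because $\mathcal{H}(0)=\mathcal{H}_0$ has a constant gap of $2$ and $\breve{G}(s)$ varies polynomially in $s$, this reduces to the same continuity-plus-polynomial-bound argument already used in Corollaries~\ref{cor::circulant} and \ref{cor::BCCB}.
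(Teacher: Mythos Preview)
Your proposal is correct and follows exactly the template the paper intends: rewrite $\mathcal{H}(s)$ via (\ref{eqn::ABxform}), observe that $\breve{A}(s)$ and $\breve{B}(s)$ remain (BC)$^2$CB with entries quadratic in $s$, and then rerun the diagonalization and cosine-expansion argument of Theorem~\ref{thm::BC2CB} in place of Theorem~\ref{thm::circulant}. The paper's own proof is simply the one-line remark that the argument is analogous to Corollary~\ref{cor::circulant}, so you have supplied precisely the details that remark is pointing to.
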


\section{Representations of the Hamiltonians}
\label{sec::representation}

Using the Jordan-Wigner transformation, we can define Hamiltonians
using other kinds of particle operators that also have large
ground-state energy gaps.  In the Hubbard model of free electrons,
interaction terms such as $c_j^{\dagger}c_k^{\dagger} - c_jc_k$ do not
occur because they do not conserve the number of electrons.  However,
they may occur in spin systems
transformed into fermionic representations.  The best-known example is
the Hamiltonian resulting from the Jordan-Wigner transformation
applied to the XY-model \cite{lieb61}.  Let us first identify {\em
all} the Hamiltonians that, under the Jordan-Wigner
transformation \cite{somma02}
\begin{align}
  c_j &= (-1)^{j-1} \sigma_1^z\sigma_2^z...\sigma_{j-1}^z 
  \left(\frac{\sigma_j^x - i\sigma_j^y}{2}\right) \;, \notag \\
  c_j^{\dagger} &= (-1)^{j-1} \sigma_1^z\sigma_2^z...\sigma_{j-1}^z
  \left(\frac{\sigma_j^x + i\sigma_j^y}{2}\right)\;,
  \label{eqn::JW}
\end{align}
yield a quadratic fermionic Hamiltonian in the form of
(\ref{eqn::quadForm}).  Theorem~\ref{thm::equiv} is equivalent to the
result in \cite[p. 4]{wehefritz06}, but using a different basis
representation.

\begin{thm}[Quadratic fermionic Hamiltonians represented with Pauli operators]
  \label{thm::equiv}
  There is a bijection between Hamiltonians on $n$ qubits of the form
  \begin{equation}
    \mathcal{H} =
    \sum_{j=1}^n W_{j,j} \sigma_j^z 
    + \sum_{k>j} W_{j,k} \sigma_j^x \sigma_{j+1}^z \dots
                  \sigma_{k-1}^z \sigma_k^x 
    + \sum_{k>j} W_{k,j} \sigma_j^y \sigma_{j+1}^z \dots
                  \sigma_{k-1}^z \sigma_k^y  \; ,
    \label{eqn::allterms}
  \end{equation}
  where the coefficients $W_{j,k}$ are real, and Hamiltonians of the
  form
  \begin{equation}
    \mathcal{H} = \sum_{j,k=1}^n 
    A_{j,k} \left(c_j^{\dagger} c_k - c_j c_k^{\dagger}\right) 
    + B_{j,k}\left(c_j^{\dagger} c_k^{\dagger} - c_j c_k \right) \;,
    \label{eqn::pfquad}
  \end{equation}
  where $\{c_j:\;j=1,\dots,n\}$, defined by (\ref{eqn::JW}), satisfy
  the FCRs, $A$ is a real symmetric $n\times n$ matrix, and $B$ is a
  real anti-symmetric matrix.  The bijection is given by the
  invertible transformation
    \begin{align}
    A_{j,j} &= W_{j,j} \;, \notag \\
    A_{j,j+m} &=  A_{j,j+m} 
       = \frac{(-1)^{m+1}}{2}\left(W_{j,j+m}+W_{j+m,j}\right) \;, \notag \\
    B_{j,j+m} &= -B_{j,j+m} 
       = \frac{(-1)^{m+1}}{2}\left(W_{j,j+m}-W_{j+m,j}\right) \;.
  \end{align}
  \begin{proof}
    Apply (\ref{eqn::JW}) to (\ref{eqn::pfquad}), and use the
    commutation relations for Pauli operators to simplify the result.
  \end{proof}
\end{thm}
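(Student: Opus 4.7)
The plan is to substitute the Jordan-Wigner transformation (\ref{eqn::JW}) into the quadratic fermionic expression (\ref{eqn::pfquad}) term by term, simplify with the Pauli algebra, and then check bijectivity by giving an explicit inverse together with a dimension count. As a warm-up, the diagonal $j=k$ case reduces via $(\sigma^x_j + i\sigma^y_j)(\sigma^x_j - i\sigma^y_j)/4 = (1+\sigma^z_j)/2$ (and its conjugate) to $c_j^{\dagger}c_j - c_j c_j^{\dagger} = \sigma^z_j$, so the diagonal of $A$ contributes $A_{j,j}\sigma^z_j$ and forces $W_{j,j} = A_{j,j}$.

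For the off-diagonal case I would fix $j<k$ and set $m=k-j$. Multiplying out $c_j^{\dagger}c_k$, the leading overlap $[\sigma^z_1\cdots\sigma^z_{j-1}]^2$ is the identity, the remaining tail is $\sigma^z_j\sigma^z_{j+1}\cdots\sigma^z_{k-1}$, and anticommuting the stray $\sigma^z_j$ past the ladder operator on site $j$ produces a sign $-1$. Combined with the Jordan-Wigner prefactor $(-1)^{j-1}(-1)^{k-1} = (-1)^m$, one obtains
\begin{equation}
c_j^{\dagger} c_k = \frac{(-1)^{m+1}}{4}(\sigma^x_j + i\sigma^y_j)\,\sigma^z_{j+1}\cdots\sigma^z_{k-1}\,(\sigma^x_k - i\sigma^y_k),
\end{equation}
and the analogous identities for $c_j c_k^{\dagger}$, $c_j^{\dagger} c_k^{\dagger}$, and $c_j c_k$ differ only in the signs inside the ladder operators at sites $j$ and $k$.

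Now I would form the anti-Hermitian combinations $c_j^{\dagger}c_k - c_j c_k^{\dagger}$ and $c_j^{\dagger}c_k^{\dagger} - c_j c_k$ and expand the site-$j$ and site-$k$ ladder operators. The mixed terms $\sigma^x_j\sigma^y_k$ and $\sigma^y_j\sigma^x_k$ cancel once the $(j,k)$ and $(k,j)$ summands are combined using $A^T=A$ and $B^T=-B$, leaving precisely the $\sigma^x_j(\cdots)\sigma^x_k$ and $\sigma^y_j(\cdots)\sigma^y_k$ strings of (\ref{eqn::allterms}). Matching coefficients yields the stated formulas for $A_{j,j+m}$ and $B_{j,j+m}$, and adding and subtracting inverts them to $W_{j,j+m} = (-1)^{m+1}(A_{j,j+m}+B_{j,j+m})$ and $W_{j+m,j} = (-1)^{m+1}(A_{j,j+m}-B_{j,j+m})$, establishing the bijection; both sides carry $n^2$ real parameters as a sanity check.

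The main obstacle is sign bookkeeping: the $(-1)^{j-1+k-1}$ from the Jordan-Wigner prefactors, the $-1$ from anticommuting a ladder operator past the leftover $\sigma^z$ on its own site, and the cancellations enforced by the $(j,k)\leftrightarrow(k,j)$ symmetry must all be tracked carefully to land on the precise sign $(-1)^{m+1}$ in the final map. Beyond these signs, the argument is entirely mechanical.
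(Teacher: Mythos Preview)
Your approach is correct and is exactly the computation the paper's one-line proof (``apply the Jordan--Wigner transformation and simplify with the Pauli algebra'') asks you to carry out, just fleshed out in detail. One small correction to your bookkeeping: the mixed $\sigma_j^x\sigma_k^y$ and $\sigma_j^y\sigma_k^x$ terms in fact cancel already within each anti-Hermitian combination $c_j^\dagger c_k - c_j c_k^\dagger$ and $c_j^\dagger c_k^\dagger - c_j c_k$ (because the extra sign from $(\sigma_j^x\pm i\sigma_j^y)\sigma_j^z=\mp(\sigma_j^x\pm i\sigma_j^y)$ flips with the ladder), so you do not need the $(j,k)\leftrightarrow(k,j)$ pairing for that step.
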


Using Theorem~\ref{thm::equiv}, we can restate earlier results in a
surprisingly simple manner.  First, observe that application of
Theorem~\ref{thm::equiv} to $\mathcal{H}_0$ defined in (\ref{eqn::H0})
yields
\begin{equation}
  \mathcal{H}_0 = \sum_{j=1}^n \sigma_j^z \;.
\end{equation}
The ground state of $\mathcal{H}_0$ is the
configuration with each particle in a spin-down eigenstate of
$\sigma^z$. If $\mathcal{H}_P$ is in the form of
(\ref{eqn::allterms}), then so is the Hamiltonian evolution
\begin{equation}
  \mathcal{H}(s) = (1-s)\sum_{j=1}^n \sigma_j^z + s\mathcal{H}_P \;,
  \label{eqn::pauliEv}
\end{equation}
for $0\leq s\leq 1$.

Next, observe that, up to sign, the elements of the matrix $W$ are the
same those of $A+B$.  So to find the ground-state energy gap for a
Hamiltonian that can be written in the form of (\ref{eqn::allterms}),
we only need to apply the necessary sign changes to the elements of
$W$, find the least non-zero singular value of the resulting matrix,
and multiply by two.  Thus Theorem~\ref{thm::rndgap} can be applied to
the Hamiltonians in (\ref{eqn::allterms}) yielding a simple result:
\begin{thm}[Ground-state energy gaps of Hamiltonians 
defined using Pauli operators with Gaussian coefficients]
  \label{thm::rndgap2}
  Let $\mathcal{H}$ be defined by (\ref{eqn::allterms}), where the
  elements of $W$ are N(0,1) and independent.  Let $\gamma$ be the
  ground-state energy gap of $\mathcal{H}$.  Then, for large $n$,
  $n\gamma^2/4$ converges in distribution to the probability density
  function
  \begin{equation}
    \rho(x) = \frac{1+\sqrt{x}}{2\sqrt{x}}e^{-(x/2+\sqrt{x})} \;.
  \end{equation}
  \begin{proof}
    Observe that the entries of $W$ are, up to sign, those of $A+B$ as
    defined by Theorem~\ref{thm::equiv}.  Thus $A+B$ has independent
    N(0,1) entries, so we have the same proof as
    Theorem~\ref{thm::rndgap}.
  \end{proof}
\end{thm}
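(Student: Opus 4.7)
The plan is to lift the result in Theorem~\ref{thm::rndgap} across the bijection of Theorem~\ref{thm::equiv}. Since the ground-state energy gap of a quadratic fermionic Hamiltonian in the form of (\ref{eqn::pfquad}) is twice the least non-zero singular value of $A+B$, all that must be checked is that under the bijection, the matrix $A+B$ inherits independent N(0,1) entries from $W$; then Edelman's theorem (Theorem~\ref{thm::edelman}) applies verbatim.

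First I would compute $A+B$ explicitly from the formulas in Theorem~\ref{thm::equiv}. The diagonal gives $(A+B)_{j,j} = W_{j,j}$. For $m > 0$, adding the expressions for $A_{j,j+m}$ and $B_{j,j+m}$ causes the $W_{j+m,j}$ contributions to cancel, yielding $(A+B)_{j,j+m} = (-1)^{m+1} W_{j,j+m}$; symmetrically, using $A_{j+m,j} = A_{j,j+m}$ and $B_{j+m,j} = -B_{j,j+m}$, the $W_{j,j+m}$ contributions cancel, yielding $(A+B)_{j+m,j} = (-1)^{m+1} W_{j+m,j}$. Thus every entry of $A+B$ is $\pm 1$ times a distinct entry of $W$, so the entries of $A+B$ are independent. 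Since the N(0,1) distribution is symmetric about zero, multiplication by $\pm 1$ leaves the marginal distributions unchanged, and $A+B$ has independent N(0,1) entries.

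With that in hand, the conclusion is immediate: Theorem~\ref{thm::edelman} gives that the least singular value $\varsigma$ of $A+B$ is non-zero with probability one and that $n\varsigma^2$ converges in distribution to $\rho(x)$; since $\gamma = 2\varsigma$, the quantity $n\gamma^2/4 = n\varsigma^2$ has the claimed limiting distribution. The only mildly subtle point, and what I would flag as the step most deserving of care, is the cancellation verifying that $(A+B)_{j,k}$ and $(A+B)_{k,j}$ are proportional to distinct entries of $W$ rather than to the same symmetrized combination; the symmetry of $A$ and anti-symmetry of $B$ conspire so that the two halves of $W_{j,k}+W_{k,j}$ and $W_{j,k}-W_{k,j}$ recombine into the original unsymmetrized entries, which is precisely why the bijection preserves the Gaussian ensemble structure rather than collapsing it.
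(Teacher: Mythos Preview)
Your proposal is correct and follows the same route as the paper: observe that the entries of $A+B$ from Theorem~\ref{thm::equiv} are, up to sign, the entries of $W$, hence independent N(0,1), and then invoke Theorem~\ref{thm::rndgap} (via Theorem~\ref{thm::edelman}). The paper's proof states the ``up to sign'' fact without the explicit entrywise verification you provide, but the argument is otherwise identical.
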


The universal two-dimensional cluster state may be found in polynomial
time using AQC \cite{siu07}.  The one-dimensional cluster state, while
not universal for quantum computing, is useful for gaining intuition
about cluster states \cite{doherty08}.  The third-order interaction
terms in (\ref{eqn::allterms}) are exactly the stabilizers of the
one-dimensional cluster state \cite{pachos04}.  In fact
\begin{equation}
\mathcal{H} = - \sum_{j=2}^{n-1} \sigma^x_{j} \sigma^z_{j+1} \sigma^x_{j+2}
+(-1)^{n-1} \sigma^y_1 \sigma^z_2 \sigma^z_3 \dots \sigma^z_{n-2} \sigma^y_{n-1}
+(-1)^{n-1} \sigma^y_2 \sigma^z_3 \sigma^z_4 \dots \sigma^z_{n-1} \sigma^y_n
\end{equation}
is a Hamiltonian whose ground state is the one-dimensional cluster
state.  By Theorem~\ref{thm::equiv} $A$ and $B$ matrices corresponding
to $\mathcal{H}$ are circulant, and so by Theorem~\ref{thm::circulant}
the one-dimensional cluster state can be realized in polynomial time.

In general, we can define Fermi operators using spin-$S$ operators
provided $2S+1 = 2^n$ for some $n$ \cite{dobrov03}.  Let us consider
$S=3/2$.  Using $n$ spin-3/2 particles, we can define $2n$ Fermi
operators by
\begin{align}
c_{1,j} &= \frac{-1}{\sqrt{3}} S_j^- S_j^z S_j^-
\prod_{k<j} \left[ \frac{5}{4}-\left(S_k^z\right)^2\right] \;,\\
c_{2,j} &= \frac{1}{\sqrt{3}} \left( \frac{1}{2}+S_j^z\right)^2S_j^-
\prod_{k<j} \left[ \frac{5}{4}-\left(S_k^z\right)^2\right] \;,
\end{align}
where $S^x$, $S^y$, and $S^z$ are spin-3/2 operators and $S^{\pm} =
S^x\pm iS^y$.  While the standard Jordan-Wigner transform applied to a
one-dimensional chain of spin-1/2 particles results in a tridiagonal
$B$ matrix, the spin-3/2 transform applied to a one-dimensional
chain of spin-3/2 particles yields a pentadiagonal $B$ matrix.

\section{Conclusion}

We showed that polynomially-large ground-state energy gaps are rare in
many-body quantum Hermitian operators, but the gap is generically
polynomially-large for quadratic fermionic Hamiltonians.  We then
proved that adiabatic quantum computing can realize the ground states
of Hamiltonians with certain random interactions, as well as the
ground states of one, two, and three-dimensional fermionic interaction
lattices, in polynomial time.  Finally, we used the Jordan-Wigner
transformation to show that our results can be restated with Pauli
operators in a surprisingly simple manner, and also consider a related
spin-3/2 transformation.

Since quadratic fermionic Hamiltonian evolutions are classically
simulatable, the adiabatic quantum computations in
Section~\ref{sec::largegaps} are simulatable.  Thus we have provided a
polynomial-time classical algorithm for finding properties of the
ground states of certain random-interaction Hamiltonians and fermionic
interaction lattices in one, two, and three dimensions.

It should be noted that the Jordan-Wigner transformation can be
generalized to higher dimensions, e.g. \cite{wang91}.  Interesting
results may follow from application of these alternate transformations
to our theorems.

Some fermionic systems may only approximately decouple into
``non-interacting quasiparticles'', unlike the exact decouplings
studied here.  These systems may be ``approximately'' classically
simulatable have ``approximately'' polynomially-large ground state
energy gaps.  This may be interesting to explore.

Also, in principle, the decoupling-transformation may be classically
difficult to find, whereas in the Hamiltonians studied here it is
known how to find them.  It is interesting whether similar results
could be obtained for systems where the transformation is difficult to
find explicitly.

\appendix 
\section{Proof of the Lieb {\em et al.} theorem}

For completeness we include a proof of Theorem~\ref{thm::lieb}.  We
first need the property that the fermionic commutation relations are
preserved under certain unitary transformations.

\begin{thm}[Unitary transformations]
  \label{thm::unitaryxform}
  Suppose the operators $\{c_j:\;j=1,\dots, n\}$ obey the
  FCRs.  Let
  \begin{equation}
    T = 
    \left( \begin{array}{cc} 
      U & V \\
      V & U
    \end{array} \right),
  \end{equation}
  where $U$ and $V$ are real $n\times n$ matrices, and suppose $T$ is
  unitary.  Define the set of operators $\{\eta_j:\;j=1,\dots,n\}$ by
  \begin{equation}
    \left(\begin{array}{c} 
      \eta_1\\
      \eta_2\\
      \vdots\\
      \eta_n\\
      \eta_1^{\dagger}\\
      \eta_2^{\dagger}\\
      \vdots\\
      \eta_n^{\dagger}
      \end{array}\right) = 
    T
   \left(\begin{array}{c} 
     c_1\\
     c_2\\
     \vdots\\
     c_n\\
     c_1^{\dagger}\\
     c_2^{\dagger}\\
     \vdots\\
     c_n^{\dagger}
   \end{array}\right) \;,
   \label{eqn::xform}
  \end{equation}
  where (\ref{eqn::xform}) denotes the transformation
    \begin{align}
      \eta_j = \sum_{i=1}^n T_{j,i} c_i + T_{j,i+n} c_{i+n}^{\dagger} \;.
    \end{align}
  Then $\{\eta_j:\;j=1,\dots,n\}$ also obey the FCRs.
  \begin{proof}
    The proof follows from substituting the definitions of
    $\{\eta_j:j=1\dots n\}$ into the FCRs, and using the known
    commutation relations on $\{c_j,c_j^{\dagger}:j=1\dots n\}$.
  \end{proof}
\end{thm}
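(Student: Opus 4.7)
The plan is to verify the two FCRs for the $\eta_j$ directly by bilinear expansion, and to show that the resulting conditions on $U$ and $V$ are exactly the block identities encoded by $T T^\dagger = I$. Because $U$ and $V$ are real, the transformation (\ref{eqn::xform}) unfolds row-by-row to $\eta_j = \sum_i (U_{j,i} c_i + V_{j,i} c_i^\dagger)$, and reading off the lower block of $T$ gives $\eta_j^\dagger = \sum_i (V_{j,i} c_i + U_{j,i} c_i^\dagger)$. A preliminary sanity check is that this latter formula really is the Hermitian conjugate of the former; this is automatic precisely because of the matching $U,V,V,U$ block pattern and the reality of the entries, and it is the structural reason the theorem restricts to $T$ of this special form rather than to arbitrary unitaries on the doubled space.

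With these two expressions in hand, I would compute $\{\eta_j, \eta_k^\dagger\}$ by expanding into the four kinds of $c$-anticommutators. The given FCRs force the $cc$ and $c^\dagger c^\dagger$ terms to vanish, while the mixed terms collapse via $\{c_i, c_l^\dagger\} = \delta_{i,l}$ to $\sum_i (U_{j,i} U_{k,i} + V_{j,i} V_{k,i})$, which is the $(j,k)$ entry of $U U^T + V V^T$. An identical expansion of $\{\eta_j, \eta_k\}$ collapses instead to the $(j,k)$ entry of $U V^T + V U^T$.

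To finish, I would read off the block structure of $T T^T = I$ (equivalent to unitarity since $T$ has real entries): the diagonal blocks yield $U U^T + V V^T = I_n$ and the off-diagonal blocks yield $U V^T + V U^T = 0$. Substituting into the two expressions above gives $\{\eta_j, \eta_k^\dagger\} = \delta_{j,k}$ and $\{\eta_j, \eta_k\} = 0$, as required.

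There is no genuine obstacle here; the argument is essentially bookkeeping once one has unpacked the block notation. The only subtlety worth flagging is the Hermitian-conjugation consistency between the upper and lower halves of $T$ mentioned above, which is what makes the hypothesis on the block form of $T$ both natural and necessary: with any other arrangement the vector on the left-hand side of (\ref{eqn::xform}) would fail to be closed under Hermitian conjugation, and the collapse to clean block identities would not occur.
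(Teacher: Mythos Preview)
Your proposal is correct and is exactly the computation the paper has in mind: its proof is the one-line instruction to substitute the definitions of $\eta_j$ into the FCRs and simplify using the $c$-anticommutators, which is precisely what you carry out in detail. Your observation that the needed identities $UU^T+VV^T=I_n$ and $UV^T+VU^T=0$ are the block components of $TT^T=I$ is the heart of the matter and is left implicit in the paper.
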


Now we are ready to prove Theorem~\ref{thm::lieb}:
\begin{proof}[Proof of Theorem~\ref{thm::lieb}]
  We write Equation (\ref{eqn::quadForm}) as
  \begin{align}
    \mathcal{H} &= \left(\begin{array}{cccccccc}
      c_1^{\dagger} & c_2^{\dagger} & ... & c_n^{\dagger} &
      c_1 & c_2 & ... & c_n 
    \end{array}{}\right)
    \left(\begin{array}{rr}
      A &  B \\
      -B & -A
    \end{array}\right)
    \left(\begin{array}{c}
      c_1 \\ c_2 \\ \vdots \\ c_n \\
      c_1^{\dagger} \\ c_2^{\dagger} \\ \vdots \\ c_n^{\dagger}
    \end{array}{}\right) \;.
    \label{eqn::matrixnotation}
  \end{align}
  The theorem is equivalent to showing there are solutions to
  \begin{align}
    &\left(\begin{array}{rr}
      A &  B \\
      -B & -A
    \end{array}\right)
    = \notag \\
    &\frac{1}{2}\left( \begin{array}{rr} 
      (X+Y) & (X-Y) \\
      (X-Y) & (X+Y)
    \end{array} \right)  ^{\dagger} 
    \left( \begin{array}{rr} 
      \Lambda  & 0 \\
      0 & -\Lambda
    \end{array} \right) 
    \frac{1}{2}
    \left( \begin{array}{rr} 
      (X+Y) & (X-Y) \\
      (X-Y) & (X+Y)
    \end{array} \right) \;,
    \label{eqn::equivform}
  \end{align}
  for some non-negative real $n\times n$ diagonal matrix $\Lambda$,
  where $X$ and $Y$ are unitary.  If so, then substituting Equation
  (\ref{eqn::equivform}) into Equation (\ref{eqn::matrixnotation})
  and using the definition of $\eta_k$, we get
  \begin{align}
    \mathcal{H} &= \sum_{k=1}^n \left( \Lambda_k\eta_k^{\dagger}\eta_k  
    - \Lambda_k \eta_k \eta_k^{\dagger} \right) \;.
  \end{align}
  Further, by Theorem~\ref{thm::unitaryxform}, $\{\eta_k:k=1\dots
  n\}$ satisfy the FCRs.  So we can apply the FCRs to the second
  term in each summand to get Equation (\ref{eqn::princip}).

  Now we set about finding solutions to Equation
  (\ref{eqn::equivform}).  We rewrite it for convenience as:
  \begin{equation}
    \left( \begin{array}{rr} 
      X+Y & X-Y \\
      X-Y & X+Y
    \end{array} \right)
    \left(\begin{array}{cc}
      A & B \\
      -B & -A
    \end{array}\right)
    = 
    \left( \begin{array}{cc} 
      \Lambda & 0 \\
      0 & -\Lambda
    \end{array} \right)
    \left( \begin{array}{cc} 
      X+Y & X-Y \\
      X-Y & X+Y
    \end{array} \right) \;. 
    \label{eqn::matrixform}
  \end{equation}
  Equation (\ref{eqn::matrixform}) is equivalent to the
  following four equations:
  \begin{align}
    (X+Y)A  - (X-Y)B    &= \Lambda(X+Y) \label{eqn::first} \;,\\
    (X+Y)B  - (X-Y)A    &= \Lambda(X-Y) \label{eqn::second} \;,\\
    (X-Y)A  - (X+Y)B    &= -\Lambda(X-Y) \label{eqn::third} \;,\\
    (X-Y)B  - (X+Y)A    &= -\Lambda(X+Y) \label{eqn::fourth}\;.
  \end{align}
  Evidently only two of the equations are independent.  Adding and
  subtracting Equations (\ref{eqn::first}) and
  (\ref{eqn::third}) yields
  \begin{align}
    X(A-B) &= \Lambda Y \label{eqn::eq1} \;,\\
    Y(A+B) &= \Lambda X \label{eqn::eq2}\;.
  \end{align}
  We can left-multiply by $\Lambda$ to get
  \begin{align}
    \Lambda X(A-B) &= \Lambda^2 Y \label{eqn::eq3} \;,\\
    \Lambda Y(A+B) &= \Lambda^2 X \label{eqn::eq4}\;, 
  \end{align}
  and then substitute Equation (\ref{eqn::eq2}) into
  Equation (\ref{eqn::eq3}) and Equation (\ref{eqn::eq1}) into
  Equation (\ref{eqn::eq4}) to get the pair of
  eigen-decomposition equations
  \begin{align}
    Y(A+B)(A-B) &= \Lambda^2 Y\label{eqn::eig1} \;,\\
    X(A-B)(A+B) &= \Lambda^2 X \label{eqn::eig2}\;.
  \end{align}
  Since $A$ is real symmetric and $B$ is real anti-symmetric,
  $(A+B)^{\dagger} = A-B$ and so $(A-B)(A+B)$ and $(A+B)(A-B)$ are
  symmetric positive semi-definite.  So there is always a unitary
  $X$ and $Y$ with non-negative diagonal $\Lambda^2$ satisfying
  Equations (\ref{eqn::eig1}) and (\ref{eqn::eig2}).
\end{proof}

\begin{acknowledgments}
  This research is supported in part by National Science Foundation
  Grant CCF 0514213.  The authors would like to thank Stephen Bullock,
  Gavin Brennen, William Kaminsky, Charles Clark, Ana Maria Rey,
  E. ``Manny" Knill, and Eite Tiesinga for
  helpful comments and feedback.
\end{acknowledgments}

\bibliography{sources}
\end{document}